\documentclass[sigconf]{acmart}

\usepackage{amsmath, amsthm}
\usepackage{listings, fancyvrb, multicol}
\usepackage{xcolor}
\usepackage{parcolumns}
\usepackage{tcolorbox}
\usepackage{balance}

%listlisting settings.
\lstset{basicstyle=\footnotesize\ttfamily, tabsize=2,
	escapeinside={@}{@}, columns=flexible}
\lstset{literate={<<}{{$\langle$}}1  {>>}{{$\rangle$}}1}
%\lstset{literate={<}{{$\langle$}}1  {>}{{$\rangle$}}1}
\lstset{language=C++, morekeywords={CAS,commit,empty,job,taken,entry,GOTO,bool}}
\lstset{xleftmargin=5.0ex, numbers=left, numberblanklines=false, frame=single, showstringspaces=false}
%\lstset{caption={\texttt{tuple} and \collect{} algorithms},captionpos=b,label={alg:collect}}
\makeatletter
\lst@Key{countblanklines}{true}[t]%
{\lstKV@SetIf{#1}\lst@ifcountblanklines}

\lst@AddToHook{OnEmptyLine}{%
	\lst@ifnumberblanklines\else%
	\lst@ifcountblanklines\else%
	\advance\c@lstnumber-\@ne\relax%
	\fi%
	\fi}
\makeatother

\newtheorem{theorem}{Theorem}[section]
\newtheorem{observation}[theorem]{Observation}

\newcommand{\minsert}{multiInsert}
\newcommand{\mremove}{multiRemove}

\newcommand{\code}[1]{\texttt{#1}}
\newcommand{\oread}{\code{Read}}
\newcommand{\owrite}{\code{Write}}

\newcommand{\cas}{\code{CAS}}

%%%%Lockopoly names %%%%%%%

\newcommand{\descriptor}{descriptor}

\copyrightyear{2022}
\acmYear{2022}
\setcopyright{acmcopyright}\acmConference[PODC '22]{Proceedings of the 2022 ACM Symposium on Principles of Distributed Computing}{July 25--29, 2022}{Salerno, Italy}
\acmBooktitle{Proceedings of the 2022 ACM Symposium on Principles of Distributed Computing (PODC '22), July 25--29, 2022, Salerno, Italy}
\acmPrice{15.00}
\acmDOI{10.1145/3519270.3538448}
\acmISBN{978-1-4503-9262-4/22/07}

\begin{CCSXML}
	<ccs2012>
	<concept>
	<concept_id>10003752.10003809.10011778</concept_id>
	<concept_desc>Theory of computation~Concurrent algorithms</concept_desc>
	<concept_significance>500</concept_significance>
	</concept>
	</ccs2012>
\end{CCSXML}

\ccsdesc[500]{Theory of computation~Concurrent algorithms}

\keywords{locks, wait-freedom, randomized algorithm}

\begin{document}

\title{Fast and Fair Randomized Wait-Free Locks}

	\author{Naama Ben-David}
	\affiliation{
		\institution{VMware Research}
		\country{USA}}
	\email{bendavidn@vmware.com}

	\author{Guy Blelloch}
	\affiliation{
		\institution{Carnegie Mellon University}
		\country{USA}}
	\email{guyb@cs.cmu.edu}
	
	\begin{abstract}

We present a randomized approach for wait-free locks with strong bounds on time and fairness in a context in which any process can be arbitrarily delayed.  Our approach supports a tryLock operation that is given a set of locks, and code to run when all the locks are acquired.  A tryLock operation, or attempt, may fail if there is contention on the locks, in which case the code is not run.
Given an upper bound $\kappa$ known to the algorithm on the point contention of any lock, and an upper bound $L$ on the number of locks in a tryLock's set,  a tryLock will succeed in acquiring its locks and running the code with probability at least $1/(\kappa L)$.  It is thus fair. Furthermore, if the maximum step complexity for the code in any lock is $T$, the attempt will take $O(\kappa^2 L^2 T)$ steps, regardless of whether it succeeds or fails.  The attempts are independent, thus if the tryLock is repeatedly retried on failure, it will succeed in $O(\kappa^3 L^3 T)$ expected steps, and with high probability in not much more.
%Importantly, however, retrying is not mandatory, and a process may choose to execute different code upon failure.

% If the algorithm does not know the bound $\kappa$, we present a variant that can guarantee a probability of at least $1/\kappa\log\kappa$ of success. 

%We assume an oblivious adversarial scheduler, which does not make decisions based on the operations, but can predetermine any schedule for the processes, which is unknown to our algorithm. Furthermore, to account for applications that change their future requests based on the results of previous lock attempts, we strengthen the adversary by allowing decisions of the start times and lock sets of tryLock attempts to be made adaptively, given the history of the execution so far.

\end{abstract}
	\maketitle

%\pagenumbering{gobble}
%\newpage
%\pagenumbering{arabic}

		\section{Introduction}

In concurrent programs, \emph{locks} allow executing a `critical section' of code atomically, so that it appears to happen in isolation. Locks are likely the most important primitives in concurrent and distributed computing; they give the illusion of a sequential setting, thereby simplifying program design. 
%Locks allow concurrent programs to give the illusion of being sequential, simplifying their design. 
%Often, an implementation of mutual exclusion is referred to as a \emph{lock}, since it `locks out'  out processes' critical sections. 
%it allows a process to `lock out' other processes from accessing a certain part of memory while executing its critical section.
%
However, locks can also become scalability bottlenecks for concurrent systems. %, possibly preventing progress. %, affecting fairness, or just being slow.

To illustrate these concepts, we use the classic dining philosophers problem as a running example.  In the \emph{dining philosophers problem}, first introduced by Dijkstra, $n$ philosophers sit around a table, with one chopstick between each pair of philosophers.  Each philosopher is in one of three states -- thinking, hungry or eating -- when hungry, they need to pick up both adjacent chopsticks to be able to eat.  When done eating, they put down the chopstick and think for an unpredictable amount of time before next being hungry.  In the asynchronous setting, a scheduler decides when each philosopher takes a step, and can delay a philosopher arbitrarily.  What should philosophers do to minimize the number of steps they take from becoming hungry until 
they are done eating?  To avoid having the philosophers starve, we make two assumptions: firstly, once a philosopher acquires the chopsticks, the number of steps taken eating is bounded by a constant, and secondly, others can help a philosopher eat by taking steps on their behalf.  Without the first, a philosopher could starve their neighbor by eating forever, and without the second, the scheduler could starve a philosopher by never letting its chopstick-holding neighbor take a step.

In this paper we present the first algorithm for this setting that ensures that each philosopher will acquire their chopsticks and eat in $O(1)$ steps in expectation.    %This is clearly making progress, fair, and fast, at least in the asymptotic sense.
Clearly, our algorithm thus ensures progress is made quickly (at least in the asymptotic sense), and ensures fairness in the sense that everyone who is hungry gets to eat.
Our algorithm is randomized, and assumes an oblivious adversarial scheduler (i.e., one that decides the interleaving of the philosopher's steps ahead of time), but adaptive adversarial philosophers who can choose how long to think knowing everything about the system.  

We are not just interested in the dining habits of philosophers, but more generally in fine-grained locks.  The chopsticks represent the locks, the philosophers processes, and eating represents a critical section of code.  By allowing arbitrary code in the critical section, more than two locks per critical section, and arbitrary conflicts among the locks (not just neighbors on a cycle), the setting covers a wide variety of applications of fine-grained locks.  For example, it captures operations on linked lists, trees, or graphs that require taking a lock on a node and its neighbors for the purpose of making a local update.  Indeed, such local updates with fine-grained locks are the basis of a large number of concurrent data structures~\cite{drachsler14,bronson10,masstree12,lazylist06,KungL80,Bayer88,arttre16,pugh89}, and of graph processing systems such as GraphLab~\cite{low2010graphlab}. Note that in many of these applications, the number of locks that need to be taken is still a small constant.

Our approach relies on light-weight \emph{tryLock} attempts, which may fail and then be retried.
% in which a process tries to acquire a specified set of locks, but may fail.  
Specifically, a tryLock specifies a set of locks to acquire, and code to run in the critical section.  If a tryLock attempt by a process successfully acquires the locks, the given code is run.  In this paper, critical sections can contain arbitrary code involving private steps along with reads, writes and CAS operations on shared memory.\footnote{While most locks do not allow critical sections to experience races, we allow such scenarios, for more general group-locking mechanisms.}  We do not allow nesting of locks---i.e., the critical code cannot contain another tryLock.  The critical section ends with a call to \emph{release}, which releases all the process's locks.  For mutual exclusion, we assume that if two processes have acquired the same lock, it must appear (based on updates to shared memory) that their critical sections did not overlap in time.\footnote{We say ``appear'' since helping can cause instructions to overlap in time, but those instructions will have no effect on the shared state.}

% In this paper, we answer the above question affirmatively. % by designing wait-free fine-granularity locks with strong bounds on step complexity.
% Our approach relies on light-weight \emph{tryLock} attempts, in which a process tries to acquire a specified set of locks, but may fail. The step complexity until a process succeeds is then split into two parts; the step complexity of a single attempt, and the number of attempts needed until success. Regardless of whether an attempt succeeds or fails, it helps others complete their own attempts, thereby ensuring non-blocking progress. 
Our main contribution is the following result.

%\Naama{Should theorem include adversaries we use? }
%\Naama{Put theorem summarizing known bounds case here, with "time bounds" and "fairness bounds"}
\begin{theorem}[Informal] \label{thm:main} 
	Let $\kappa$ be the maximum number of tryLock attempts on any lock at any given time, let $L$ be the maximum number of locks per tryLock attempt, and let $T$ be the maximum number of steps taken by a critical section. There exists an algorithm for wait-free fine-granularity locks against an oblivious scheduler with the following properties.
	\begin{itemize}
		\item \textbf{Step Bound.} Each tryLock attempt takes $O(\kappa^2 L^2 T)$ steps.
		\item \textbf{Fairness Bound.} Each tryLock attempt of a process $p$ succeeds with probability at least $\frac{1}{\kappa L}$ independently of $p$'s other attempts, and allowing for a adaptive adversary to decide when to attempt the tryLock.
	\end{itemize}
\end{theorem}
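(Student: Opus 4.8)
The plan is to exhibit an algorithm and then verify the two bounds separately; since the statement is an existence claim, the bulk of the work is the construction. I would base the design on three ingredients: (i) each tryLock attempt draws a fresh random priority (a \emph{skill level}) from a range large enough that ties can be ignored; (ii) each lock maintains a small shared record on which attempts register through a CAS-based competition, so that at any instant the attempt currently ``holding'' a lock is the highest-skill one among those registered; and (iii) a helping mechanism by which any process can drive a registered attempt through its critical section and release, so that no attempt ever waits on a delayed process. To acquire its set of at most $L$ locks, an attempt registers at all of them, and it wins exactly when it is simultaneously the highest-skill contender at each. Wait-freedom comes from helping: a stalled attempt is completed by whoever else needs its locks, and the posted descriptors make this helping well-defined.

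For the step bound I would argue that a single attempt does $O(\kappa L)$ work to register at and read the competition on its locks, and that the only way it can be forced to do more is by helping attempts that are ahead of it in priority. Here I would bound the number of attempts it can be made to help using the point contention: at each of its $L$ locks there are at most $\kappa$ contenders, so at most $O(\kappa L)$ distinct attempts are relevant. Helping one of them to completion costs $O(\kappa L T)$ steps, since that attempt must itself be driven across its up to $L$ locks, each contended by up to $\kappa$ others, with a critical section of $T$ steps. Multiplying the $O(\kappa L)$ attempts needing help by the $O(\kappa L T)$ cost of each yields the claimed $O(\kappa^2 L^2 T)$. The care needed is to ensure that helping is \emph{idempotent} and \emph{bounded}---a completed critical section is never re-run, and no attempt is pulled into an unbounded chain of nested helping---which is precisely what the descriptor-and-CAS design must guarantee.

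The fairness bound is the heart of the matter and where I expect the real difficulty. The intended argument is that an attempt $A$ succeeds whenever its skill level exceeds that of every attempt it conflicts with in the relevant window, and that there are at most $\kappa L$ attempts in that set, so a fresh uniform draw makes $A$ the maximum with probability at least $1/(\kappa L)$. Two things make this delicate. First, point contention bounds the number of contenders \emph{at any single instant}, not over $A$'s whole lifetime, so I must pin down one critical instant---an associated read or conflict interval---at which $A$'s fate is decided, and show that only the attempts live at that instant can block $A$; otherwise a stream of transient competitors over a long interval could push the success probability far below $1/(\kappa L)$. Second, because the scheduler is oblivious while the processes deciding \emph{when} to attempt are adaptive, I must establish that $A$'s skill level is independent of the skills and timing of its competitors: obliviousness of the interleaving fixes the competitor set at the critical instant independently of $A$'s random draw, and sampling $A$'s skill freshly at the moment it begins makes it independent of everything the adaptive adversary has already observed. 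Conditioned on the at most $\kappa L$ competing skills, $A$ is then the strict maximum with probability at least $1/(\kappa L)$, and this holds independently across $A$'s repeated attempts because each draws its own skill.

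I would close by noting that the retry consequence stated in the abstract follows immediately from the two per-attempt bounds: a per-attempt success probability at least $1/(\kappa L)$ makes the number of attempts until success geometric with mean at most $\kappa L$, which combined with the $O(\kappa^2 L^2 T)$ per-attempt cost gives $O(\kappa^3 L^3 T)$ expected steps and a high-probability bound by standard concentration for geometric trials. The single step I most expect to fight with is isolating that critical instant in the fairness proof, so that \emph{point} contention rather than lifetime contention controls the competitor count.
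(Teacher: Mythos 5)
Your construction and step-bound accounting match the paper's in outline: fresh random priorities, descriptors registered on each lock, idempotent bounded helping, and the product of $O(\kappa L)$ attempts helped times $O(\kappa L T)$ cost each giving $O(\kappa^2 L^2 T)$. The gap is in the fairness argument. You write that ``obliviousness of the interleaving fixes the competitor set at the critical instant independently of $A$'s random draw,'' but in this model only the \emph{scheduler} is oblivious; the \emph{player} adversary --- the one deciding when each attempt starts and on which locks --- is adaptive and sees the full history, including every priority already revealed. So the competitor set is not fixed by obliviousness at all: the adversary can wait until it observes strong priorities registered on a lock and only then launch $A$ against them, or inject fresh competitors into $A$'s locks timed according to information it has gleaned. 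Sampling $A$'s priority freshly at the moment it begins does not repair this, because the bias is in \emph{which} competitors $A$ meets, not in $A$'s own draw; the ``at most $\kappa L$ independent uniforms'' calculation is then applied to a competitor set that is itself correlated with those uniforms.

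The paper closes this gap with two mechanisms your proposal lacks. First, before an attempt chooses its priority it must \emph{help to completion} every descriptor already registered on its locks, so that any competitor whose priority was visible to the adversary when it chose to start $A$ is already decided (won or lost) before $A$ becomes active and hence can never be compared against $A$. Second, the algorithm inserts \emph{fixed delays} so that the number of $A$'s own steps from its start to its priority-reveal step (and from reveal to termination) is a fixed constant $\Theta(\kappa^2 L^2 T)$ independent of the schedule and of other descriptors' priorities; without this, the time at which $A$ reveals itself depends on how much helping it performed, which depends on competitors' priorities, and the adversary can exploit that dependence to shift which attempts overlap $A$'s reveal step. Only with both mechanisms does the paper's key lemma hold --- that the adversary must commit to whether $p'$ can threaten $p$ before learning either priority --- after which your maximum-of-uniforms argument yields the $1/(\kappa L)$ bound. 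Your instinct that isolating the ``critical instant'' is the hard step is right, but the resolution requires these algorithmic ingredients, not just the obliviousness of the scheduler.
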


Since attempts by a process $p$ are independent, a direct corollary of this result is a wait-free fine-grained lock algorithm that succeeds  in expected $O(\kappa^3 L^3 T)$ steps; simply retry upon failure.
This is the first result that achieves any step complexity bounds that rely only on these parameters. 
As a special case, our results imply an $O(1)$ step solution to the dining philosophers problem; that is, assuming it takes constant steps to eat, each attempt to eat succeeds with probability 1/4 and takes $O(1)$ steps to complete (here, $\kappa = L = 2$). 

We first describe an algorithm that assumes knowledge of the bounds $\kappa$ and $L$, and then remove this assumption using a guess-and-double technique at the cost of a logarithmic loss in success probability.
% how to remove that assumption. Essentially, we employ a guess-and-double technique to achieve the same bound except for a logarithmic loss in success probability. 
%This requires some loss in the probability of tryLocks succeeding.    
That is, for a tryLock with bounded contention $\kappa$ that is \emph{unknown} to the algorithm, the probability of success is $\Omega\left(\frac{1}{\kappa L \log (\kappa L T)}\right)$. 

Achieving non-blocking progress requires processes to help each other complete their critical sections. 
This may result in several processes running the same code when helping concurrently. Thus, to ensure correctness, critical sections must be made \emph{idempotent}, that is, regardless of how many times they are run, they appear to have only executed once. The notion of idempotence in computer science has been recognized as useful in various contexts~\cite{idempotence02,idempotence12,idempotence13}. Turek, Shasha and Prakash~\cite{turek1992locking} and Barnes~\cite{barnes1993method}, designed lock-free locks and 
showed how to make any code based on non-concurrent reads and writes idempotent with constant overhead.
% (every original step is simulated in constant steps).
However, they did not distill this property, and instead present their protocols as ad-hoc ways to use lock-free locks.

In this paper, we formally define idempotence for concurrent programs, and present a new and more general construction to achieve it.  In particular, the Turek, Shasha and Prakash's, and Barnes' approaches only supported reads and writes in critical sections, and only if they did not race.  We allow for races and also support CAS.    As with theirs, our approach only has a constant factor overhead.
% \guy{Theorem 1.2: Saying that you simulate Read, Write, CAS with Read, Write CAS is confusing. Maybe you can instead say that any code that uses Read, Write, CAS, can be **transformed** into idempotent code using the same primitive operations.}
% \begin{theorem}[Informal]
% 	\label{theorem:idempotent}
% 	Any code using only \oread{}, \owrite{} and \cas{} operations on shared memory can be simulated using \oread{}s, \owrite{}s and \cas{es} as primitive operations such that (1) it is idempotent,  (2) every simulated memory operation takes constant time, (3) the simulation has the same effect as the original code. 
% 	% As long as there are no concurrent writes and CAMs on the same location, the simulation is linearizable.
% \end{theorem}
Thus, our wait-free locks are applicable to general code without any asymptotic overhead. We believe that the definition of idempotence and its new construction are of independent interest; indeed our formalization of idempotence has recently been used elsewhere~\cite{ben2022lock}.
%but these are not the main contribution of the paper. \Naama{This formalization has recently been used elsewhere (PPoPP'22).}

%\subsection{Related work/ Open problems}
%Existing deterministic solutions for fine-grained wait-free locks either only handle one global lock (for example, this is the case with wait-free universal constructions), and not wait-free (i.e., provide no bound on the number of steps until success), or take a number of steps proportional to the number of processes in the entire system (in the dining philosophers problem, this would mean step complexity that grows with $n$). Whether randomization is necessary to achieve our bounds is an open question, but we conjecture that this is the case; in the much easier synchronous model, it is not known how to derandomize a similar problem in less than $\Omega(\log^* n)$ steps~\cite{ColeV86}.

	%	\input{intro2}
       \section{Assumptions and Approach}\label{sec:overview}

For the purpose of outlining the key assumptions, challenges and approach, and
to generalize beyond dining philosophers while still abstracting away details of the machine model, we consider acquiring locks as a game involving players and competitions.  Each player (tryLock attempt or philosopher) $p$ participates in the game by specifying a set of competitions (locks or chopsticks) it wants to participate in.  Different players can specify different (but potentially overlapping) sets of competitions.  While a player is in the game, it takes steps to try to win its competitions, and a scheduler interleaves the steps of the different players.  If a player wins all its competitions (acquires all its locks), it wins the game and celebrates (executes its critical section or eats).  The celebration is itself a sequence of steps.  It then exits.  To ensure mutual exclusion, no two players can simultaneously have won their game (before exiting) if they share a competition.
%>>>>>>> c4633cda5f1ad3177af69495ce8babea4bf55b8a

\paragraph{Adversarial assumptions.}
An \emph{adversarial scheduler} is often used to model the inherent asynchrony in concurrent systems; that is, the order in which processes execute steps (the \emph{schedule}) is assumed to be controlled by an adversary. An \emph{adaptive adversary} is assumed to see everything that has happened in the execution thus far (the history), whereas an \emph{oblivious adversary} is assumed to make all of its scheduling decisions before the execution begins. Some separations are known between adaptive and oblivious scheduler settings~\cite{Abrahamson88,Rabin80,giakkoupis2012tight,giakkoupis2014randomized}. Often, an oblivious adversary is considered to be a reasonable assumption, since asynchrony in real hardware is not generally affected by the values written on memory. %~\cite{y}.

In our setting, in addition to the adversarial scheduler, the players can be adversarial, possibly trying to increase or decrease the probability of celebrating.  In the setting of tryLocks it is unreasonable to assume
that the players are oblivious having pre-decided when to enter the game and with what competitions.  This is because a player will likely need to try again if it fails on the first attempt, and possibly on a different set of locks.  Therefore for all but the very simplest protocols the point at which a player requests to enter a game, and possibly which competitions it requests to compete on, will depend on what has happened so far.  We therefore assume the player is adaptive and makes decisions knowing the full history.  In summary, we assume two separate adversaries; the \emph{player adversary}, which is \emph{adaptive} and controls the start time and set of locks of each tryLock attempt, and the \emph{scheduler adversary}, which is \emph{oblivious} and controls the order of player steps.

\paragraph{Random Priorities.}
In our algorithm, as in other algorithms for similar problems~\cite{Rabin80,LR81,LynchSS94,DFP02}, each player is assigned a \emph{random priority} such that higher priorities win over lower priorities.  In the synchronous setting this by itself ``solves'' our problem.  In particular in a round a set of players play the game by (1) picking their priority, (2) checking if they are the highest priority on all their competitions, and (3) celebrating if they are.  Assuming independent and unique priorities, the probability of a player $p$ celebrating is at least inversely proportional to the number of distinct players that requested any of the competitions $p$ played on.  The game is therefore fair, and assuming a synchronous scheduler (round-robin), the number of steps is bounded.

Unfortunately in the asynchronous setting, even with an oblivious scheduler, the situation is much more difficult.  Firstly, bounding the steps requires having players help others; otherwise, a player could be blocked waiting for another player to celebrate.  We solve this using idempotent code.  Secondly, and much more subtly, keeping the competition fair is challenging against an adaptive player adversary.  Most obviously, if the player adversary wants a player to lose it could wait for other strong players to be in shared competitions (recall that it can see the history), and then start the player.  Even if we hide the priorities from the adversary, it could likely gain knowledge by how the players are doing.  Even an oblivious player adversary could skew the game; if strong players take more steps than weak players and stay active longer, incoming players would see a biased field of strong players.  There are several other subtle difficulties with achieving fairness.

\paragraph{Our approach.}
Our approach to making the game fair is to ensure the adversary's choices introduce no bias once a player enters the game.
We prevent the introduction of biased priorities by the adversaries with two key ideas.
Firstly, each player enters the game in a \emph{pending} state, before its priority is assigned. Before being assigned a priority, a player $p$ must help complete the attempts of all the competitors that started before it. That is, any player $p'$ whose priority was known to the player adversary before player $p$ joined the game will be forced to finish competing without competing against $p$. After completing this helping phase, $p$ is assigned its priority, in what we call its \emph{reveal} step; after this step, $p$ is no longer pending, and is now \emph{active}. 
%Furthermore, $p$ doesn't pick its skill level (and therefore its skill level is not known to the adversary), until after it finishes this helping phase.

To implement the reveal step atomically, we model each competition (lock) as an \emph{active set} object, which keeps track of membership (i.e. who is currently competing on this competition/lock). It allows players to insert and remove themselves, as well as query the object to get the set of currently active members. We then model the system of competitions as a \emph{multi active set} object, which allows players to insert themselves into the membership of several competitions at once, i.e., all the competitions they compete on. 
The active state is then easy to implement; players are considered active if they appear in the active set of their locks. 
%first insert themselves in a pending state into the multi active set, and then set a flag that changes their state to active.  \guy{Isn't the flag part of the multi active set object?} Membership queries do not return pending members. 

However, helping others before becoming active is not sufficient to mitigate the bias that the adversary could introduce; 
the player adversary can enter a new player $p'$ after $p$ but while $p$ is still pending, and the player $p'$ could overtake $p$ and become active before $p$ (recall the scheduler adversary can make $p'$ much faster than $p$).  Based on steps in the protocol, $p'$'s priority could affect the step at which $p$ reaches its reveal step. This allows the adversary to affect whether $p$ competes with $p'$ based on the latter's priority. 

To avoid this problem, our second idea is to force the player to stay in the pending state for a fixed number of its own steps, before revealing itself and competing. We do so by introducing \emph{delays} in which the player simply stalls until it has taken enough steps since it joined the game.
The important aspect of introducing these fixed delays is that the time a player becomes active (and thus begins to compete) is unaffected by other players.  It therefore cannot be sucked earlier or pushed later based on the priority of current competitors.

We note that our approach is robust against strong adversarial players, but only an oblivious scheduler.  A strong scheduler could still move the point at which a player reveals itself based on known priorities.  We leave handling an adaptive scheduler adversary as an open question.

%\paragraph{Use of randomness.} The reader may wonder whether randomness is necessary to achieve our bounds; this is an open question. Certainly, a deterministic solution would not be vulnerable to the same challenges we describe. Existing deterministic lock-free locks in the asynchronous model either allow only one global lock (for example, this is the case with wait-free universal constructions~\cite{afek1995wait}), or do not have strong bounds on the number of steps~\cite{Afek97}.  Furthermore, we note that while in the simpler synchronous setting, solving the dining philosopher problem with constant steps using randomization is trivial, how to solve it deterministically in that setting is still not known. The best known deterministic solution in that setting takes $\Omega(\log^* n)$ steps per round using two-ruling sets~\cite{ColeV86}. The difficulty is in breaking symmetry. It thus seems unlikely that achieving our bounds in the harder asynchronous setting is possible.

%Whether randomization is necessary to achieve our bounds is an open question, but we conjecture that this is the case.   Even in the much simpler synchronous PRAM
%model, the best know deterministic solution for letting a constant fraction of the philosophers eat is based on finding a 2-ruling set, to which the best solution  
%to the dining-philosophers problem  (i.e. completing a constant fraction of unit-cost critical regions arranged in a cycle) 
%is $\Omega(\log^* n)$ steps~\cite{ColeV86}.  The difficulty is in breaking symmetry.

       %\input{related-abridged}
       \section{Related Work}
\label{sec:related}

\paragraph{Randomization in Locks.}
Using randomization to acquire locks is a difficult problem that has been studied for many years.  The difficulty arises from the lack of synchronization among processes, and the ability of the adversary to delay processes based on observations of the current competition.  Rabin~\cite{Rabin80} first considered the problem for a single lock, and only for the acquisition (i.e. no helping).  Like ours, his scheme used priorities.  Saias~\cite{Saias92} showed the algorithm did not satisfy the claimed fairness bounds due to information leaks of the sort described in Section~\ref{sec:overview}, and Kushilevitz and Rabin~\cite{kushilevitz1992randomized} fixed it with a more involved algorithm.  Lehmann and Rabin 
also developed an algorithm for the dining philosophers problem~\cite{LR81}.  Lynch, Saias and Segala ~\cite{LynchSS94} later proved that with probability 1/16 within 9 rounds one philosopher would eat.   However, our goal is much stronger, requiring a constant fraction to eat.  Moreover, their model is not fully asynchronous---a round involves every process taking a step.  Duflot, Fribourg, and Picaronny generalized the algorithm to the fully asynchronous setting~\cite{DFP02}, but at the cost of a bound that depends on the number of processors.

More recent work has also looked at randomized mutual exclusion for a single lock and without helping~\cite{giakkoupis2012tight,giakkoupis2014randomized}.  This work has focused on the Distributed Shared Memory (DSM) model, which separates local from remote memory accesses.  It shows that although the local time is necessarily unbounded (since there is no helping), the number of remote accesses can be bounded.  Most of the above work has assumed an adversary that knows what instructions have been run on each process, but not the arguments of those instructions.  This is more powerful than an oblivious adversary, but less powerful than an adaptive one.  None of the work considered separating the player adversary from the scheduler adversary.

It has been shown that a tenet of concurrent algorithm design, linearizability~\cite{Herlihy90} does not nicely extend when randomization is introduced~\cite{GHW11}.  Linearizability allows operations that take multiple steps to be treated as if they run atomically in one step.  Unfortunately, however, analyzing probability distributions for the single step case does not generalize to a multistep linearized implementation, especially when analyzed for a weaker adversary.     This is what lead to some of the difficulties encountered by the previous work, and some of the challenges we face.

\paragraph{The Need for Randomization.}
It seems unlikely that acquiring wait-free fine-granularity locks can be done in $O(1)$ steps deterministically.  This is even true in the simpler case of the dining philosophers, where each philosopher only ever tries to acquire  two locks and each lock only ever has two philosophers contending on it. Assuming the philosophers do not know their position around the table, this is even true in a \emph{synchronous} setting; the best known solution to the similar two-ruling-set problem, i.e., identifying a subset of $n$ philosophers who are separated by one or two other philosophers, takes $O(\log^* n)$ steps~\cite{ColeV86}.  

The issue is that there is a symmetry that needs to be broken.  In the randomized synchronous setting, the problem becomes easy using random priorities as discussed in Section~\ref{sec:overview}---every philopher will have the highest
priority among itself and its two neighbors with probability $1/3$ and will therefore eat with that probability.
In the asynchronous setting, the problem can be solved in $O(n)$ steps deterministically using, for example, Herlihy's universal wait-free construction~\cite{Herlihy91}.  Every philosopher can announce when they are hungry and then try to help all others in a round robin manner, using a shared pointer to the philosopher currently being helped.
Using more sophisticated constructions~\cite{Afek97}, the steps can be reduced, but the total number of steps still depends on the total number of concurrently hungry philosophers in the system.  
%We know of no previous randomized solution in the asynchronous setting, and unlike the synchronous setting, it seems by no means trivial. 
%\Naama{Could have paragraph headings for related work. Title this one `need for randomization'.}

\paragraph{Lock-Free Locks.}
Turek et al.~\cite{turek1992locking} and independently Barnes~\cite{barnes1993method} introduced the idea of lock-free locks.  They are both based on the idea of leaving a pointer to code to execute inside the locks, such that others can help complete it.  In the locked code, Turek et al.'s method supports reads and writes and locks nested inside each other.  As with standard locks, cycles in the inclusion graph must be avoided to prevent deadlocks. 
%one needs to avoid cycles in the inclusion graph to prevent deadlock (or infinite recursive help with their version).  
Their approach thus allows arbitrary static transactions via two-phase locking by ordering the locks, and acquiring them in that order.  It uses recursive (or ``altruistic'') helping in that it recursively helps transactions encountered on a required lock.  It is lock free, uses CAS, and although the authors do not give time bounds it appears that if all transactions take at most $T_m$ time in isolation, the amortized time per transaction is $O(p T_m)$, where $p$ is the number of processes.
Barnes's approach supports arbitrary dynamic transactions in a lock free manner, and uses LL/SC.  
It uses a form of optimistic concurrency~\cite{KungR81} allowing for dynamic transactions.
%As with Turek et al.'s approach it supports read and writes inside of nested locks, but it uses 
%allowing for dynamic transactions.
%to first evaluate the transaction optimistically without locks, and then acquire the needed locks in a fixed order and checking the reads are still valid. %(Barnes called it the ``caching'' method).  
%This allows dynamic transactions, but it comes at the cost of possibly needing to abort the transaction.  
%Barnes shows the approach is still lock-free, but at a larger cost that depends on contention.  
As with the Turek et al. approach, it uses recursive helping.  Neither approach is wait-free---a transaction can continuously help and then lose to yet another transaction. 
%remove another from a lock, but then lose to yet another to acquire that lock.   Barnes can furthermore continually abort, although only if another transactions succeeds.
%\guy{We don't mention idempotence.}

%\guy{Reword the following?}
To allow our locks to be non-blocking, we present a general construction for achieving idempotence. A similar construction was recently presented and 
%Work concurrent with ours recently also presented a general construction for achieving idempotence and 
used it to implement lock-free locks~\cite{ben2022lock}. Their lock-free locks, while efficient in practice, do not have a bound on steps per tryLock attempt, as a single attempt can help arbitrarily many other ongoing attempts (not necessarily only on the locks in its lock set).
They are lock-free, but not wait-free.

\paragraph{Contention Management in Transactions.}
Shavit and Touitou~\cite{shavit1997software} introduce the idea of ``selfish'' helping in the context of transactions.  They argue that if a transaction encounters a lock that is taken, it should help the occupant release this lock, but not recursively help.  In particular, if while helping another transaction, it encounters a taken lock, then it aborts the transaction being helped. %instead of recursively helping.  
Their approach only supports static transactions, as it needs to take locks in a fixed order.  It differs from our helping scheme in that there are no priorities involved.  In our scheme, when a transaction being helped meets another transaction on a lock, we abort the one with lower priority and continue with the one being helped if it is not the one aborted.  Shavit and Touitou's approach is again lock-free but not wait-free.  The worst case time bounds are weaker than Turek et al. or Barnes since there can be a chain of aborted transactions as long as the size of memory, where only the last one succeeds.  

Fraser and Harris~\cite{fraser2007concurrent} extend Barnes's approach based on optimistic concurrency and recursive helping.  The primary difference is that they avoid locks for read-only locations by using a validate phase (as originally suggested by Kung and Robinson~\cite{KungR81}).  They break cycles between a validating read and a write lock on the same location, by giving arbitrary (not random) priorities to the transactions to break this cycle.  As with Shavit and Touitou's method, operations can take amortized time proportional to the size of memory.
% Fernandes and Cachopo describe a lock-free transactional system with multiversioning~\cite{FernandesC11}.   It has the advantage that reads-only transactions are never delayed.
%However, it fully sequentializes the transactions that successfully complete.
There has been a variety of work on contention management for transactions under controlled schedulers, some of it using
randomization~\cite{AttiyaEST06,GuerraouiHP05,SchneiderW09,SharmaB12}, but it does not apply to the asynchronous setting we are considering.

\paragraph{Efficient Wait-Freedom.}
Starting with Herlihy~\cite{Herlihy91}, many researchers have studied wait-free universal constructions, many of which can be applied
to at least a single lock, but most of these have an $O(P)$ factor in their time complexity, where $P$ is the \emph{total number of processes in the system}, meaning that even under low contention they are very costly.
Afek et al.~\cite{afek1995wait} describe an elegant solution for a universal construction, or single lock in our terminology, that reduced the time complexity to be proportional to the point contention instead of the number of processors.   
% %It was presented in the setting on supporting operations on a data structure, instead of locks, but can be applied in the more general setting of locks.    
% The idea is for an operation to find a slot in an array that is at most as high as the current contention, and then promote itself up an implied tree, helping along the way, until reaching the root, and applying the protected code. 
% %at which point you have the lock and can execute the protected code (either your own or the code you are helping).    
% For each lock, the approach requires space proportional to the maximum possible contention, which in general is $O(P)$. 
% %Unlike in our single lock algorithm, this space cannot be shared among the locks.  
% It does not support early aborting in a fair way --- aborting may cause an operation to never succeed. %yes you can quit, but then you might never succeed.   
% Their particular scheme for helping the code in the critical section also only works if there are no races with other processes, but only races among helpers.   However, our approach described in Section~\ref{sec:idempotence} could be used along with their scheme.
% Our active set algorithm adopts its key technical ideas from their construction.
Attiya and Dagan~\cite{AttiyaD01} describe a technique that should be able to support nested locks, although described in terms of operations on multiple locations.  They only support accessing two locations (i.e., two locks).  Considering the conflict graph among live transactions, they describe an algorithm such that when transactions are separated by at least $O(\log^* n)$ in the graph, they cannot affect each other.  The approach is lock free, but not wait free, and no time bounds are given. %and not even amortized bounds on time are given.   
Afek et al.~\cite{Afek97} generalize the approach to a constant $k$ locks (locations) and describe a wait-free variant using a Universal construction.  They show that the step complexity (only counting memory operations) %, and on a $O(\log n)$-word wide LL/SC) 
is bounded by a function of the contention within a neighborhood of radius $O(\log^* n)$ in the conflict graph.  %It is not clear how this translates to time.
Both approaches are very complicated due to their use of a derandomization technique for breaking symmetries~\cite{ColeV86}.
%More recent work has studied wait-free methods that support disjoint access parallelism~\cite{EllenFKMR16}--roughly speaking, transactions can only affect each other
%if they are connected in the conflict graph.  They show an impossibility result of achieving both disjoint access parallelism and wait freedom.  They then describe an algorithm that achieves both when the number of accessed locations is constant.  They do not give time bounds.

       \section{Model and Preliminaries}\label{sec:model}

In this paper we use standard operations on memory including \oread{}, \owrite{}, and \code{CAS}.  
%In addition, we use a compare-and-modify (\cam{}) ~\cite{blelloch2018parallel} and a write-with-max (\wwm).  A \cam{} is a version of \code{CAS} that does not return whether it succeeded or not.  A \wwm$(l,v)$ atomically reads the value in location $l$, and if $v$ is less than this value, $v$ is written to the location.  It is argued that a \wwm{} is cheaper to implement in hardware than either a CAS or fetch-and-add~\cite{shun2013reducing} since it allows associative combining (like a fetch-and-add), but often does not require modifying memory (e.g. if the value for $l$ in cache is greater or equal to $v$, it need not be modified, and hence need not be flushed).  
Beyond memory operations, processes do local operations (e.g. register operations, jumps, ...).  Whenever we discuss the execution \emph{time} for a process, we mean \textbf{all} operations (i.e., instructions) that run on the process including the local ones.

A \emph{procedure} is a sequencial procedure with an invocation point (possibly with arguments), and a response (possibly with return values).   A \emph{step} is either a memory operation or an invocation or response of a procedure.  We assume all steps are annotated with their arguments and return values, and we say two steps are \emph{equivalent} if these are the same.  
We say a memory operation has \emph{no effect} if it does not change the memory (e.g. a read, a failed CAS or a write of the same value).  We assume the standard definition of linearizability~\cite{Herlihy90}.

The \emph{history} of a procedure is the sequence of steps it took, or has taken so far.  The history of a concurrent program is some interleaving of the histories of the individual procedures.  A history is valid if it is consistent with the semantics of the memory operations.  
%A history differs from a schedule in that a schedule does not specify the \emph{random values} processes get, but a history does. Thus, a schedule maps to several different histories in randomized protocols. There is a one-to-one mapping between histories and schedules in deterministic protocols. %We sometimes refer to schedules as \emph{executions}.
%\guy{[Is this distinction between schedule and history important?  i.e. do we use it?]}

A \emph{thunk} is a procedure with no arguments~\cite{ingerman1961thunks}.  Note that any code (e.g., the critical section of a lock)
can be converted to a thunk by wrapping it along with its free variables into a 
closure~\cite{steele1976lambda} (e.g., using a lambda in most modern programming languages).  Here, for simplicity, we also assume thunks do not return any values---they can instead write a result into a specified location in memory. 
%\guy{[Is it important to say thunks don't have a return value?]}   
A thunk runs with some local private memory, and accesses the main memory via a fixed set of memory operations.  

A \emph{lock} object $\ell$ provides a \emph{$tryLock_\ell$} operation, which returns a boolean value; if false, we say the $tryLock_\ell$ fails, and if it returns true, then we say that the $tryLock_\ell$ succeeded. We also define a general \emph{tryLock} procedure whose arguments are a \emph{lock set}, i.e., a set of lock and a \emph{thunk}.  We call the execution of a trylock a \emph{tryLock attempt}. A tryLock calls $tryLock_\ell$ on each of the locks in its set, and succeeds if and only if all of them succeed. A tryLock attempt $p$ returns a boolean value indicating whether it succeeded or failed, and satisfies mutual exclusion as defined in Definition~\ref{def:mutex}.

%If it succeeded, its thunk appears to have executed atomically. If $p$ failed, its thunk is not executed.  \guy{Since we allow races in the critical section, the thunks are not executed atomically.   We should define mutual exclusion more carefully and make it clear a thunk runs if and only if its trylock succeeds.}

In this paper, we aim to construct \emph{randomized wait-free} locks, and furthermore bound running time and success probabilities in terms of the \emph{point contention} on the locks in the system. We say an algorithm is \emph{randomized wait-free} if each process takes a finite expected number of steps until its operation succeeds (see Chor et al~\cite{chor1994wait} for a more formal definition). In this paper, the operations of processes are tryLock attempts. We say a tryLock attempt is \emph{live} on a lock $\ell$ from its invocation to its response (inclusive) if $\ell$ is in its lock set. The \emph{point contention of lock $\ell$ at time $t$} is the number of live tryLock attempts at time $t$ that contain $\ell$ in their lock set. The \emph{maximum point contention}, $k_\ell$ of lock $\ell$ is the maximum point contention $\ell$ can have at any point in time across all possible executions. We let $\kappa$ be an upper bound on $k_\ell$ for all locks $\ell$ in the system.
%\Naama{Add `contention of trylock"}
The contention of a tryLock attempt $p$, $C_p$, is the sum across all of $p$'s locks of $k_\ell$. That is, $C_p = \sum_{\ell \in p.lockList} k_\ell$.

%In the paper we are interested in \emph{point contention} for each tryLock attempt.  We say a tryLock attempt is live on a lock $l$ from its invocation to its response (inclusive).  The point contention for a tryLock attempt is the maximum number of live attempts on all of the locks in its lock set at any single point over its steps from its invocation to its response.  For randomized protocols, the point contention of a tryLock attempt $p$ in a schedule $S$ is the maximum point contention that is possible for $p$ in $S$ across all realizations of the randomness (across all histories that $S$ can produce).   When we place bounds on point contention, we mean that for no valid history of a program, can the lock have a higher point contention.       \guy{[Do locks or tryLocks have point contention?]}
%\guy{[Reviewer suggested defining mathematically, perhaps partly because we are inconsistent]}
%\Naama{use execution in some parts. Maybe replace history with execution, or say  we use them interchangeably. Should also specify a bit more about what an execution is and what a schedule is. Might want to say execution is a particular realization of a schedule.}

We assume two adversaries; an \emph{adaptive player adversary} and an \emph{oblivious scheduler adversary}. Formally, the scheduler adversary is a function from a time step to the process that runs an instruction on that time step, which produces a \emph{schedule}.  Our algorithms do not know this function, and the scheduler can delay any given process for an arbitrary length of time. The player adversary is a function from the history of an execution and a given process to a boolean indicating whether the given process starts a new tryLock at its next step, and if so with which locks.
%If our adversary was more powerful and the scheduler was also a function of the results of previous instructions, our algorithms would not necessarily be fair
%\Naama{Need to be specific about what parts of memory (maybe only shared memory, maybe everything but a private priority) the adaptive adversary can see.}

%\Naama{Define randomized wait-freedom as finite expected number of steps, point to Chor et al.}

\subsection{Idempotence}
%\guy{[Should have some discussion of how relates to Barnes and ST.]}

%\guy{[Also, should this be claimed as a contribution in the intro?]}

%To allow processes to help each other complete their thunks on a lock, we must ensure that regardless of how many processes execute a thunk, it only appears to execute once. For this, we use the notion of \emph{idempotence}, which roughly means that a piece of code that is applied multiple times appears as if was run once~\cite{idempotence02,idempotence13,idempotence12,ben2019delay}. We define the notion of idempotence here, and show how to make any thunk involving \oread, \owrite{} and compare-and-modify (\cam) instructions into one that is idempotent with constant overhead in Appendix~\ref{sec:idempotence}. A \cam{} is the same as a CAS, but does not return any value. Ben-David et al~\cite{ben2019delay} show how to convert any CAS into a \cam{} using only reads, writes, and CAS operations.  \guy{[Previous sentence badly worded, and should give more detail.]} We therefore use \cam{} in our discussion of idempotence, but note that this can be implemented using the standard primitives we assume for this paper.

To allow processes to help each other complete their thunks (critical sections) on a lock, we must ensure that regardless of how many processes execute a thunk, it only appears to execute once. For this, we use the notion of \emph{idempotence}, which roughly means that a piece of code that is applied multiple times appears as if was run once~\cite{idempotence02,idempotence13,idempotence12,ben2019delay}. We define the notion of idempotence here, and show how to make any thunk involving \oread, \owrite{} and compare-and-swap (\cas{}) instructions into one that is idempotent with constant overhead in the full version of the paper~\cite{lltheory}. 
Barnes~\cite{barnes1993method} and Turek et al.~\cite{turek1992locking} do not extract the notion of idempotence, but do describe a way to make code based on non-concurrent reads and writes idempotent under our definition (below).
%A \cam{} is the same as a CAS, but does not return any value. Ben-David et al~\cite{ben2019delay} show how to convert any CAS into a \cam{} using only reads, writes, and CAS operations.  \guy{[Previous sentence badly worded, and should give more detail.]} We therefore use \cam{} in our discussion of idempotence, but note that this can be implemented using the standard primitives we assume for this paper.

%\guy{Did we update with improvements we made to the definition in the other paper?}

%A thunk $T$ generates a sequence of steps $S_t$, which can depend on the memory state it sees.  
%%Here we only consider steps corresponding to shared memory operations.  
%A step $s \in S_t$ is said to be \emph{a step for thunk $T$}, regardless of which process executes it.  
A \emph{run} of a thunk $T$ is the sequence of steps taken by \emph{a single process} to execute or help execute $T$.  The runs for a thunk can be interleaved.
%, and each run can take a different branch of the code of the thunk.
A run is \emph{finished} if it reached the end of $T$.
We say a sequence of steps $S$
is \emph{consistent} with a run \emph{r} of \emph{T} if, ignoring process ids, $S$ contains the exact same steps as $r$.
We use $E~|~T$ to denote the result of starting from an execution $E$ and removing any step that does not belong to a run of the thunk $T$.
%Here we are only considering memory operations in the steps.  
%An \emph{instantiation} of a thunk $T$ is a subsequence of the steps for thunk $T$ in a history $H$, possibly from many different runs, such that those steps are consistent with a single run of the thunk.     By consistent we mean all operations on those steps have the same arguments and return values.

%The steps of a thunk $T$ in a given instantiation do not have to be executed by the same process---i.e., it can be taken from parts of multiple runs.  \Naama{Not sure if this is clear enough. We don't want to allow repetitions of steps. We also want it to make sense if there are any if statements in the thunk, then looking at the instantiation with the arguments and return values of the memory operations, we should have followed the right execution path.}

\begin{definition}[Idempotence]
	\label{def:idempotence}
		A thunk $T$ is idempotent if in any valid execution $E$ consisting of runs of $T$ interleaved with arbitrary other steps on shared memory, there exists a subsequence $E'$ of $E | T$ such that:
	\begin{enumerate}
		\item
		if there is a finished run of $T$, then the last step of the first such finished run must be the end of $E'$,
		\item
		removing all of $T$'s steps from $E$ other than those in $E'$ leaves a valid history consistent with a single run of $T$.
	\end{enumerate}
%	
%	
%	
%	A thunk $T$ is idempotent if in any valid history $H$,
%	%  A thunk $T$ is idempotent if given any number of runs, possibly concurrent with each other and other procedures, then for all valid histories $H$ with at least one finished run of $T$:
%	\begin{enumerate}
%		\item
%		there exists a valid instantiation $H'$ of $T$ that is a (possibly empty) subsequence of all operations from
%		runs of $T$ in $H$,
%		\item
%		if there is a finished run of $T$ (response on $T$), then the last step of the first such finished run must be the end of $H'$,
%		\item
%		all memory operations for $T$ in any of its runs in $H$ that are not in $H'$ have no effect on the shared memory.
%	\end{enumerate}
\end{definition}
%\Naama{H' is a history potentially from different threads, but only one "copy" of each instruction of T... Consider defining a set $\mathcal{H}$ of histories we can pick H' from. Consider also defining the projection of H onto instances of T, and get rid of all of that minus H'.}
The definition essentially states that the combination of all runs of a thunk $T$ is equivalent to having run
$T$ once, and finishing at the response of the first run. 

In the full version of the paper~\cite{lltheory} we describe a simulation/translation that converts thunks involving \oread, \owrite{} and \cas{} %compare-and-modify (\cam) 
instructions into one that is idempotent, proving the following result.

%\guy{I don't understand how the "simulation" can be linearizable. Only types that have a sequential specification can be linearizable (otherwise, linearizability is not even defined). I don't know what the sequential specification of "the simulation" is.}

%\guy{[Make the following theorem stronger.]}
\begin{theorem}
\label{theorem:idempotent}
Any thunk using only \oread{}, \owrite{} and \cas{} operations on shared memory can be simulated using \oread{}s, \owrite{}s and \cas{es} as primitive operations such that (1) it is idempotent,  (2) every simulated memory operation takes constant time, (3) the simulated operations are linearizable. 
% As long as there are no concurrent writes and CAMs on the same location, the simulation is linearizable.
\end{theorem}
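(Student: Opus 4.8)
The plan is to give an explicit simulation in which each simulated location stores a \emph{value} together with a \emph{tag} identifying the thunk-operation that last wrote it, and in which each thunk $T$ maintains a \emph{descriptor} holding a monotonically increasing program counter together with a \emph{result log} $L$, where $L[i]$ records the outcome of the $i$-th operation of the run (the value returned by a \oread{}, the success bit of a \cas{}, or an acknowledgement for a \owrite{}). The guiding invariant is that every concurrent run of the thunk agrees, step by step, on the contents of $L$: once some helper records $L[i]$, all helpers read that same value and feed it into the thunk's private state, so that all runs compute the identical sequence of operations and arguments. This reduces idempotence to two local guarantees per operation: (i) all helpers agree on its recorded result, and (ii) its effect on shared memory occurs \emph{at most once}.

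I would realize each simulated operation as a short, constant-length protocol of the form ``check-then-act-then-record.'' To execute the $i$-th operation, a helper first reads $L[i]$; if it is already filled, it adopts that result and advances to operation $i+1$ without touching memory. Otherwise it performs the underlying \oread{}/\owrite{}/\cas{} on the target location and then attempts to install the outcome into $L[i]$ with a single \cas{}, first-writer-wins. Agreement (guarantee (i)) is then immediate: $L[i]$ is written exactly once, and every run that passes operation $i$ uses that value. For \owrite{} and \cas{} the target location is updated by a \cas{} that carries the tag $(T,i)$, so that a \emph{stale} helper arriving after the write already took effect finds the tag in place and does not reissue it.

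The main obstacle, and the step I would spend the most care on, is establishing effect-once for writes and compare-and-swaps in the presence of \emph{intervening} operations by other thunks, while simultaneously keeping each simulated operation constant time. The danger is a slow helper that reads $L[i]$ as empty, is then delayed while operation $i$ completes and the target is overwritten by a later operation of this or another thunk, and finally blindly redoes the write, clobbering newer state and breaking both idempotence and linearizability. I would rule this out using the tag together with the monotone program counter: a helper acts on behalf of operation $i$ only while the descriptor's counter is still $i$, and the installing \cas{} is conditioned so that it can succeed only against a tag strictly preceding $(T,i)$; once the counter passes $i$ (equivalently, once $L[i]$ is set), no redo can take effect. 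Proving that the inner \cas{} loop of a single operation terminates after a constant number of attempts is the quantitative heart of the argument: each failed attempt must be charged to genuine progress by another participant on this same operation, of which there can be only a bounded amount before the counter advances.

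With these invariants in hand, I would finish by constructing the witness subsequence $E'$ required by Definition~\ref{def:idempotence} and verifying linearizability. For $E'$ I would select, for each logical operation $i$, the steps of the \emph{first} run to complete operation $i$ (the run that installed $L[i]$, and, for writes, the one whose tagged \cas{} took effect); by the agreement invariant these steps form a valid history consistent with a single run of $T$, and since every finished run has populated all of $L$, the last step of the first finished run is exactly the end of $E'$, matching clause (1). Linearizability follows by placing the linearization point of operation $i$ at its successful installing \cas{} (for a \oread{}, at the point the returned value was read) and checking that the resulting sequential history obeys the \oread{}/\owrite{}/\cas{} semantics; the constant step bound of Theorem~\ref{theorem:idempotent} is exactly the per-operation bound argued above. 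This mirrors, but strictly generalizes to races and to \cas{}, the read/write-only, race-free constructions of Turek et al.~\cite{turek1992locking} and Barnes~\cite{barnes1993method}.
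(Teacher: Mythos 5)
Your high-level architecture --- a per-thunk result log with first-writer-wins agreement on each entry, tagged locations for effect-at-most-once updates, and a witness subsequence assembled from the first run to complete each logical operation --- matches the shape of the construction the paper defers to its full version, and your reduction of idempotence to per-operation agreement plus effect-at-most-once is the right decomposition. The \oread{} case and the log-once mechanism are fine as you describe them.

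The gap sits exactly where you say you would ``spend the most care,'' and what you wrote does not close it. First, for a racy \owrite{}: to install $(v,(T,i))$ you must \cas{} from the location's current (value, tag), and that \cas{} can fail repeatedly because of interference from \emph{other} thunks' operations on the same location --- the paper explicitly permits races --- not only from co-helpers of operation $i$. Your charging argument (``each failed attempt is charged to progress by another participant on this same operation'') therefore does not bound the loop, and an unbounded retry loop violates clause (2); note also that a hardware \cas{} cannot be ``conditioned to succeed only against a tag strictly preceding $(T,i)$,'' since it tests equality, so you are back to read-then-\cas{} with retries, and a write, unlike a \cas{}, is not permitted to simply report failure. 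Second, for \cas{}: your order is act-then-record, so a helper whose hardware \cas{} fails (because another thunk changed the location) can install ``failure'' into $L[i]$ while a co-helper's hardware \cas{} succeeded and modified memory; first-writer-wins then makes the logged outcome contradict the memory effect, breaking both idempotence and linearizability. Repairing this needs machinery you do not supply, e.g., requiring any operation that overwrites a location tagged $(T,i)$ to first help record that tag's outcome in $T$'s log so the evidence of success is never destroyed before it is logged. Since supporting races and \cas{} is precisely what the paper claims distinguishes this theorem from the Turek et al.\ and Barnes constructions, these are the steps the proof cannot do without.
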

We note that recent concurrent work proved a similar result~\cite{ben2022lock}, but our simulation technique is different and we believe it is of independent interest.

\subsection{Mutual Exclusion with Idempotence}

We say the \emph{interval} of an idempotent thunk is from the first step of any run of the thunk until the last step of the first run that completes.
When implementing a tryLock, the safety property we require is as follows. %\Naama{Guy, please check this definition.}

\begin{definition}[Mutual Exclusion with Idempotence] \label{def:mutex}
	If a tryLock attempt $\mathcal{A}$ with thunk $\mathcal{T}$ and lock set $\mathcal{L}$ succeeds, then there is a run of $\mathcal{T}$  
that executed to completion. Furthermore, $\mathcal{T}'s$ interval does not overlap the interval of any other thunk 
%the first time $\mathcal{T}$ was executed to completion was not concurrent with the execution of any other thunk 
whose lock set overlaps~$\mathcal{L}$.
If $\mathcal{A}$ fails, there is no run of $\mathcal{T}$.
\end{definition}

%\guy{It seems we should define concurrent.   Also ``was not concurrent with'' is not clear.    I think we want to include
%  in the region that is not concurrent all steps from the first invocation of T by a run, to the first response of T by a possibly different run.   It now sounds like the region is from the invocation to response of the first run that finishes.}

       %\input{algorithm}
       %\input{old}
       \section{A Multi Active Set Algorithm}\label{sec:activeset}

%\guy{[Should mention in intro that we use an active set algorithm and perhaps claim it as a contribution.]}

We now define the \emph{multi active set} problem and present an algorithm that solves it using an \emph{active set} object. 
%present an algorithm a for an \emph{active set} object, and then extend it to a \emph{multi active set}. 
%Intuitively, an active set object allows processes to insert and remove elements from a set, and has a \emph{getSet} operation which returns the current contents of the set. 
Multi active sets will be useful in implementing our locking scheme; in Section~\ref{sec:nestedLock}, we show how to implement fast and fair locks by representing them as a multi active set object.

The \emph{active set} object was first introduced by Afek et al.~\cite{afek1999long}. It has three operations; \emph{insert}, \emph{remove}, and \emph{getSet}. A getSet operation returns the set of elements that have been inserted but not yet removed. Insert and remove operations simply return `ack', and processes must alternate insert and remove calls. 

The \emph{multi active set} problem is a generalization of the active set problem to multiple sets.  Instead of an insert, the data structure supports a \emph{\minsert} that inserts an \emph{item} into a collection of sets.  The \emph{\mremove} operation is with respect to the previous \minsert{} operation, and removes the item from the sets it was inserted into.   The getSet operation takes a set as an argument, and behaves the same as for the active set, returning all items in the given set.

\subsection{Active Set Algorithm}
We now present a novel linearizable active set algorithm. Its pseudocode appears in Algorithm~\ref{alg:activeSet}. In the full version of the paper~\cite{lltheory}, we prove it correct and show that its step complexity is \emph{adaptive} to the size of the set; insert and remove operations take $O(k)$ steps for a set with $k$ elements, and the getSet operation takes constant time.

%We present pseudocode implementing the active set object in Algorithm~\ref{alg:activeSet}. 
%We use \emph{\descriptor} to refer to the type of the elements in the set. This is to increase intuition for how we will use the active set object to implement the locking algorithm, where processes insert an object describing their tryLock attempt. However, in general, these elements can be of any type.
An \code{announcements} array of $C$ slots is maintained, where $C$ is the maximum number of elements that can be in the set at any given time. Each slot has an \code{owner} element and a \code{set}, which is a pointer to a linked-list of elements. To insert an element, a process traverses the announcements array from the beginning, looking for a slot whose \code{owner} field is empty. It then takes ownership of this slot by CASing in its new element into the slot's owner field. To remove an element from slot $i$, a process simply changes the owner of slot $i$ to \code{null}. We assume that a process maintains the index of the slot that it successfully owned in its last insert, and uses this index in its next call to remove. Intuitively, the owner fields of all the slots make up the current active set. An insert operation can always find a slot without an owner, since there are $C$ slots.
%, where $C$ is an upper bound on the number of elements in the active set at any time. 

To help implement an efficient linearizable getSet function, the insert and remove operations propagate the changed ownership of their slot to the top of the announcements array by calling the \code{climb} helper function.
The \code{climb} function works as follows. Starting at the slot given as an argument, it traverses the announcements array to the top, replacing the \code{set} field of the current slot $i$ with the \code{set} of the previous slot $i+1$, plus the owner of slot $i$. That is, the \code{climb} function intuitively collects all owners of the slots and propagates all of them to the \code{set} field of slot $0$. The getSet function can then simply read the \code{set} of \code{announcements[0]} to get the current active~set.

This algorithm is similar to the universal construction presented by Afek et al. in STOC'95~\cite{afek1995wait}, and is \emph{adaptive};
%However, while their algorithm propagates \emph{operations} to the top of an array to get executed, ours propagates \emph{elements} to be added to the active set. \Naama{The previous sentence doesn't make the difference clear.} Similarly to their algorithm, our active set algorithm is \emph{adaptive}; 
the step complexity of the insert and remove operations is proportional to the size of the active set plus the point contention during the insert operation in a given execution. This is because the number of slots that an insert operation traverses before finding one with no owner is at most the number of elements currently in the active set, plus the ones in the process of being inserted. 
We note that when using the active set object to count membership in a larger context (as was its original intent and is the way we use it for the lock algorithm), this translates to the point contention in the larger context.
%\guy{I don't understand the following.  In any case it seems it belongs at the end.}  We note that this definition of adaptive step complexity is different than the classic one used by Afek et al.~\cite{afek1995wait}; usually, adaptive running time is measured by the contention during a process's operation -- how many other processes have active operations during that time. For our active set algorithm, it may be the case that an element in the set is increasing operation step complexity, but no corresponding process is executing an operation on the active set object during that time. However, when using the active set object to count membership in a larger context (as was its original intent and is the way we use it for the lock algorithm), this count does translate to point contention.

\renewcommand{\figurename}{Algorithm}
\begin{figure}[h]
	\begin{lstlisting}
	struct Slot:
	T owner
	T* set
	Slot[C] announcements
	
	climb(int i):
	for j = i ... 0:
	for k = 1 ... 2:
	curSet = announcements[j].set @\label{line:readCurSet}@
	if j == C: newSet = announcements[j].set //corner case
	else:	newSet = announcements[j+1].set
	newMember = announcements[j].owner
	if newMember != null:
	newSet += newMember
	CAS(announcements[j].set, curSet, newSet) @\label{line:climbCAS}@
	
	T* getSet():
	return announcements[0].set @\label{line:returnSet}@
	
	int insert(T p):
	for i = 0... C-1:
	if CAS(announcements[i].owner, null, p):
	climb(i)
	return i
	
	remove(int i):
	announcements[i].owner = null
	climb(i)
	\end{lstlisting}
	\caption{Active Set Algorithm}
	\label{alg:activeSet}
\end{figure}

\subsection{Making a Multi Active Set}
We now present an implementation of a multi active set that relies on an active set object. However, our multi active set object is not linearizable. Instead, we require a weaker property, which will suffice for our use of the multi active set object to implement locks. In particular, 
%
%We require the following correctness criterion, which is weaker than linearizability.  Every 
every \minsert{} and \mremove{}  must appear to happen atomically at some point between the invocation and response.  Any getSet operation that is invoked after that point, will see the effect of the operation, and any that responds before that point will not see the effect of the operation.  However, any getSet that overlaps the point might or might-not see the effect.    This property is remeniscent of \emph{regularity} as defined for registers by Lamport~\cite{lamport1986interprocess}; we therefore call it \emph{set regularity}. 
%Set regularity is important for the correctness of our lock algorithm.   In particular, it is important for the purpose of protecting against choices of the player adversary, that all the locks a player is trying to acquire become visible at the same time, and the priority is not known until that time.

We show how to implement a set-regular multi active set from a linearizable active set object in Algorithm~\ref{alg:multiactive}. Each item is endowed with a flag that is initialized to false.  To \emph{\minsert} an item into a given collection of sets, the item is first inserted into each of these sets using an active-set insert, and then its flag is set to true.  The \mremove{} operation first unsets the flag, and then removes the item from each of the sets.  The getSet operation for the multi active set first calls the active set getSet operation, and then scans the items, returning
the only ones for which it sees the flag is set to true.   The flags can be scanned in any order, which implies the getset operation
is not linearizable.    For example, items $a$ and $b$ could be inserted into a set by two separate \minsert{}s, and 
for two getset operations that overlap the insert, one could return just $a$ and the other just $b$.

\renewcommand{\figurename}{Algorithm}
\begin{figure}
\begin{lstlisting}
type T:
	void setFlag()
	void clearFlag()
	bool getFlag()

void multiInsert(T item, ActiveSet* collection):
	item.clearFlag()
	for set in collection: set.insert(item)
	item.setFlag()
	
void multiRemove(T item, ActiveSet* collection):
	item.clearFlag()
	for set in collection: set.remove(item)

T* getSet(ActiveSet A):
	T* set =  A.getSet()
	for T in set:
		if not T.getFlag():
			remove T from set
	return set
\end{lstlisting}
\caption{Multi Active Set Algorithm}
\label{alg:multiactive}
\end{figure}

We prove the following correctness and step complexity theorems in the full version of the paper~\cite{lltheory}.
%correctness of our Multi Active Set implementation. That is, we show that it satisfies set regularity.

\begin{theorem}\label{thm:multisetCorrectness}
	The Multi Active Set algorithm presented in Algorithm~\ref{alg:multiactive} satisfies set regularity, assuming it uses a linearizable Active Set implementation.
\end{theorem}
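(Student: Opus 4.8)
The plan is to exhibit, for every \minsert{} and \mremove{} operation, an atomic point lying inside its execution interval, and then verify directly that each getSet's return value is consistent with set regularity relative to these points. I would place the atomic point of a \minsert{} at its final step, the \code{setFlag()}, and the atomic point of a \mremove{} at its first step, the \code{clearFlag()}. The guiding intuition is that an item ``logically belongs'' to the multi active set exactly while its flag is set, so each (item, set) pair behaves like a single-bit regular register whose writes occur at these atomic points.

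First I would record two structural facts about a single item $x$ and a single set $A$ in its collection, using that operations on $x$ alternate \minsert{}/\mremove{}. (1)~Flag invariant: $x$'s flag is \code{true} precisely on the half-open intervals running from the \code{setFlag()} of a \minsert{} to the \code{clearFlag()} of the following \mremove{}. This is immediate from the code, since \code{clearFlag()} is the first step of both operations and \code{setFlag()} is the last step of a \minsert{}. (2)~Underlying-membership containment: since the active set is linearizable, the linearization point of $A$'s insert of $x$ precedes the \code{setFlag()} (it is called earlier in the code) and the linearization point of $A$'s remove of $x$ follows the \code{clearFlag()} (it is called later); hence the interval during which $x$ is in the underlying active set \emph{contains} the flag-\code{true} interval, but may be strictly larger on both ends.

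Next I would establish a membership characterization for a getSet operation $G$: by linearizability of the active set, $G$'s call to \code{A.getSet()} has a linearization point $L\in[\mathrm{inv}(G),\mathrm{resp}(G)]$, and $G$ reads $x$'s flag (if at all) at a later time $f_x>L$; thus $G$ returns $x$ if and only if $x$ is in the underlying set at $L$ \emph{and} $x$'s flag is \code{true} at $f_x$. With this, the regularity conditions follow by cases. If no atomic point falls in $[\mathrm{inv}(G),\mathrm{resp}(G)]$ and the most recent atomic point at or before $\mathrm{inv}(G)$ is a \code{setFlag()} (so $x$ is logically present), then both $L$ and $f_x$ lie in a single flag-\code{true} interval, which by fact~(2) sits inside an underlying-membership interval, so $G$ returns $x$; symmetrically, if that most recent atomic point is a \code{clearFlag()}, the flag is \code{false} throughout $[\mathrm{inv}(G),\mathrm{resp}(G)]$ and $G$ excludes $x$. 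When an atomic point does fall inside $G$'s interval, set regularity imposes no constraint, so there is nothing to prove.

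The main obstacle, and the reason the flag is needed at all, is the gap between the underlying-set membership interval and the logical (flag-\code{true}) interval identified in fact~(2): during the sub-windows running from the underlying insert's linearization to the \code{setFlag()}, and from the \code{clearFlag()} to the underlying remove's linearization, the active set reports $x$ as present even though $x$ is not logically a member. The crux is to show these stale windows can never cause an incorrect answer in a no-overlap case, precisely because the flag is \code{false} throughout them, forcing any $G$ that reads $x$ from the underlying set during such a window to discard it. I would also note that, since the flag reads for different items can resolve at different effective times, two overlapping getSets may legitimately disagree on borderline items, which is exactly why the construction is set-regular rather than linearizable.
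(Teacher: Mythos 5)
Your proposal is correct and follows the argument the paper intends: the paper defers the full proof to its extended version, but its description of the construction (flag set as the \emph{last} step of \minsert{}, cleared as the \emph{first} step of \mremove{}) makes clear that the atomic points are exactly the \code{setFlag()} and \code{clearFlag()} steps you chose, with the flag masking the stale windows on either side of the underlying active-set membership interval. Your flag invariant, the containment of the flag-\code{true} interval inside the underlying-membership interval, and the per-item case analysis on the latest preceding atomic point constitute a complete proof of set regularity along those lines.
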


\begin{theorem}\label{thm:multisetTime}
	In Algorithm~\ref{alg:multiactive}, each operation takes $O(\kappa)$ steps per active set it accesses.
\end{theorem}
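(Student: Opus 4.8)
The plan is to reduce the claim to the step complexity of the underlying linearizable active set, which I may assume from the analysis accompanying Algorithm~\ref{alg:activeSet}. Specifically, I will use three facts about a single active set: an \code{insert} or \code{remove} costs $O(\kappa)$ steps (the adaptive bound is proportional to the size of the set plus the concurrent inserts, which in the larger context equals the point contention on the corresponding lock and is therefore at most $\kappa$); a \code{getSet} costs $O(1)$ steps; and the set returned by \code{getSet} has size at most the point contention, hence at most $\kappa$. I will also use that the flag operations \code{setFlag}, \code{clearFlag}, and \code{getFlag} on an item are constant-time by the type assumption on $T$.

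First I would handle \minsert{} and \mremove{}, which are the easy cases. The body of \minsert{} is one \code{clearFlag}, a loop performing one active-set \code{insert} per set in its collection, and one \code{setFlag}; the two flag operations contribute $O(1)$ total and each \code{insert} contributes $O(\kappa)$, so the cost is $O(\kappa)$ times the number of sets in the collection, i.e.\ $O(\kappa)$ per active set accessed. The argument for \mremove{} is identical, with \code{remove} in place of \code{insert} and one fewer flag operation. Neither the loop bound nor the flag costs depend on anything beyond the per-set active-set complexity, so this part is essentially immediate.

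The slightly less automatic case is the multi-active-set \code{getSet}, which accesses a single active set $A$. It performs one active-set \code{getSet} (cost $O(1)$), then iterates over the returned set, calling \code{getFlag} and conditionally deleting an element, at $O(1)$ cost per element. The key step is thus to bound the number of iterations, which equals the size of the set returned by the active-set \code{getSet} on $A$. Here I would invoke the size bound above: the returned set consists exactly of items whose \code{owner} entry is currently present in $A$, and the number of such items is the point contention on the lock represented by $A$, hence at most $\kappa$. The loop therefore costs $O(\kappa)$, and since \code{getSet} touches only one active set, this is $O(\kappa)$ per active set accessed.

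The main obstacle is making this size bound rigorous rather than any calculation. The set returned by the underlying \code{getSet} can be strictly larger than the current logical multi active set: it may include items that are mid-\minsert{} (already inserted into $A$ but with flag not yet set) and items that are mid-\mremove{} (flag already cleared but not yet removed from $A$). I would argue that each such item nonetheless corresponds to a distinct live attempt contending on $A$ at that moment, so their count is still bounded by the point contention and hence by $\kappa$. Once this is established, the remainder is the routine $O(1)$-per-item accounting that follows directly from the assumed active-set complexity.
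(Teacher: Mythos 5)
Your proposal is correct and follows essentially the same route the paper takes (its formal proof is deferred to the full version, but the surrounding text establishes exactly the facts you use): reduce each of \minsert{}, \mremove{}, and \code{getSet} to the underlying active set's adaptive bounds --- $O(\kappa)$ per insert/remove, $O(1)$ per \code{getSet}, and a returned set of size at most the point contention $\kappa$ --- plus constant-time flag operations. Your flagged concern about mid-\minsert{} and mid-\mremove{} items inflating the returned set is handled just as you suggest, since such items belong to attempts that are still live and are therefore counted by the point contention.
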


\section{The Lock Algorithm}\label{sec:nestedLock}
%	winners = {}
%		winner = process with max non-negative priority in set whose status is active
%		winners += winner

We now present the lock algorithm, whose pseudocode is shown in Algorithm~\ref{alg:lock}. Intuitively, each lock is represented by an active set object that is part of a single multi active set object. Each tryLock attempt creates a \emph{\descriptor}, which specifies the list of locks to be acquired, the code to run if the locks are acquired successfully, and two other metadata fields: the \emph{priority} assigned to this descriptor, and its current \emph{status}. The status is set to \code{active} initially, and can be changed to \code{lost} or \code{won} later in the execution as the fate of this attempt is determined. The descriptor is used as the item to be inserted into the active sets; the priority field doubles as the flag for the multi active set; initially, it is set to $-1$, indicating a false flag. 
%Initially, the priority is assigned the value $-1$ \guy{[Uninitialized instead of -1?]}, to indicate that the \descriptor{} isn't competing yet.   %\guy{[Should we use ``lost'' instead of ``failed''?]}

After initializing its \descriptor{}, a process starts its tryLock \emph{attempt} with that \descriptor. In a slight abuse of notation, we sometimes use a \descriptor{} $p$ to refer to the process that initialized $p$ as it is executing the attempt tied to $p$.  Without loss of generality we assume that each attempt is tied to a unique \descriptor.  
At  a high level, the algorithm implements each lock as an active set object, using Algorithm~\ref{alg:activeSet} (described in Section~\ref{sec:activeset}). A  \descriptor{} is inserted into the active sets of each of its locks via a  \minsert{}; to set the descriptor's flag to true in the multi active set algorithm, the negative priority is replaced with a uniformly randomly chosen value.
%inserts itself into the active set of each of its lock, and then updates its priority to a value chosen uniformly at random. 
The \descriptor{} then calls \code{getSet} on each of its locks in turn, and compares its priority to that of all other \descriptor{s} in the set. Intuitively, if a \descriptor{} $p$ has the maximum priority of all \descriptor{s} on all of its locks, then it wins, and its thunk gets executed (celebrates).

However, the algorithm is more subtle, as it must block the adversary from skewing the distribution of a given \descriptor{} $p$'s competitors.
Therefore, upon starting a new tryLock attempt, before calling the  \minsert{}, %inserting itself into any lock's active set, 
and in particular, before choosing a random priority, 
$p$ \emph{helps} all \descriptor{s} on its locks.  %that have already revealed their priority. 
Intuitively, this is done to `clear the playing field' by ensuring that any \descriptor{} whose priority might have affected $p$'s adversarial start time cannot compete with $p$.
To help other \descriptor{s}, $p$ executes a \code{getSet} on each of its locks in turn, and, for each \descriptor{} $p'$ in the set, 
%, and checks, for each descriptor in each set, whether their priority is non-negative \Naama{initialized}. For each \descriptor{} $p'$ with a non-negative \Naama{initialized} priority that $p$ sees, 
$p$ helps $p'$ determine whether it will win or lose. To do so, it calls the \code{run(p')} function, which serves as the helping function and is the way that a \descriptor{} competes against other \descriptor{s}. We describe the \code{run} function in more detail below; this function is the core of the lock algorithm. Before describing it, we first explain what a \descriptor{} $p$ does after helping, and when it calls the \code{run} function to help itself.

After having executed the \code{run} function for every competitor, % that had a non-negative \Naama{initialized} priority, 
it is time for $p$ to enter the game itself. First, $p$ calls the  \minsert{} with its lock set as the argument. Recall that before returning, the  \minsert{} sets $p$'s priority to a uniformly random value.\footnote{In this work, we assume that priorities do not conflict. To enforce this, it suffices to pick priorities in a range that is polynomial in the total number of processes, $P$, in the system. Conflicts can be handled by considering both processes to have lost, and would only slightly affect our bounds.}
%\guy{Fix grammar. Also, say what domain the value is being chosen from.}
 We call the time at which $p$'s priority is written its \emph{reveal step}, since it now reveals its priority to all other \descriptor{s}, and can now start receiving help from others.
 %inserts a pointer to itself into the active sets of all of its locks. After having done that, $p$ picks a random priority uniformly at random, and writes it into $p.priority$. We call this $p$'s \emph{reveal step}, since it now reveals its priority to all other \descriptor{s}, and can now start receiving help from others. 
 Note that by  the set regularity property of the multi active set and the priority's use as $p$'s flag in the multi active set, 
%Note that since a pointer to $p$ is present in the active sets of all of its locks, 
any \code{getSet} on one of those locks that starts its execution after $p$'s reveal step will return a set that includes $p$.
%, and any reader will see $p$'s non-negative \Naama{initialized} priority. 
$p$ now calls \code{run(p)} to compete in the game. 

After returning from the \code{run(p)} call, $p$ is guaranteed to have a non-active status (either \code{won} or \code{lost}). That is, it knows the outcome of its attempt.  At  this point, $p$ cleans up after itself by calling multiRemove to remove itself from all active sets it was in. 

\paragraph{The \code{run} function.} The \code{run} function forms the core of the lock algorithm.
%\guy{[Reviewer says we use $p$ and $p'$ backwards from the code.]}
The \code{run} function on a \descriptor{} $p$ checks the active sets of all of $p$'s locks, and compares $p$'s priority to all \descriptor{s} $q$ in those sets such that $q$'s status is \code{active}. On each such comparison, the \descriptor{} with the lower priority is \emph{eliminated}. This means that its status is atomically CASed from \code{active} to \code{lost}. After comparing $p$'s priority with all \descriptor{s} in the active sets of all of its locks, the \code{run} function \emph{decides} whether $p$ won or lost. This involves trying to atomically CAS $p$'s status to \code{won}. This will work if and only if $p$ hasn't been previously eliminated. Finally, \code{run(p)} `celebrates' the end of $p$'s competitions by running its thunk if its status is \code{won}. The celebrationIfWon is also executed for each competitor that $p$ faced. This ensures that any \descriptor{} that reaches the \code{won} status gets its thunk executed before another \descriptor{} sharing a lock wins, and ensures mutual exclusion.
%We note that because the status is changed with a CAS, it can only be changed once.

\begin{figure}
\begin{lstlisting}
struct @\Descriptor@:
	ActiveSet* lockList //list of active set objects
	thunk
	int priority
	status = {active, won, lost}
	
	bool getFlag():
		return (priority > 0)
	void setFlag():
		Delay until @$T_0 = c\cdot \kappa^2 \cdot L^2\cdot T$@ total steps taken @\label{line:delay1}@ 
		priority = rand //@\textit{reveal step of p}@
	void clearFlag():
		priority = -1

tryLocks(lockList, thunk):  
	p = new @\Descriptor@(lockList, thunk, -1, active)
	for each lock @$\ell$@ in p.lockList:
		set = getSet(@$\ell$@) @\label{line:getHelpSet}@
		for each p' in set:
				run(p')
	multiInsert(p, p.lockList) @\label{line:insert}@
	run(p)
	multiRemove(p, p.lockList)
	Delay until @$T_1= c'\cdot \kappa \cdot L\cdot T$@ steps taken since previous delay @\label{line:delay2}@
	
run(@\Descriptor@ p): 
	for each lock @$\ell$@ in p.lockList:
		set = getSet(@$\ell$@) @\label{line:runGetSet}@
		if (p.status == active): @\label{line:checkActive}@
			for p' in set:
				if (p'.status == active): 
					if  p.priority > p'.priority: @\label{line:compare}@
						eliminate(p')           @\label{line:eliminate1}@     
					else if (p != p'): eliminate(p) @\label{line:eliminate2}@
				celebrateIfWon(p') @\label{line:celebrateOther}@
	decide(p)  @\label{line:decide}@ 
	celebrateIfWon(p) @\label{line:runCelebrateSelf}@
	
decide(@\Descriptor@ p):
	CAS(p.status, active, won)

eliminate(@\Descriptor@ p):
	CAS(p.status, active, lost)
	
celebrateIfWon(@\Descriptor@ p):  
	if(p.status == won):
		execute p.thunk 
\end{lstlisting}
\caption{Lock Algorithm}
\label{alg:lock}
\end{figure}

%\Naama{Need to define point contention on a lock.}

%\guy{Is A randomized? If yes, say so. Also, in that case, you probably mean "**expected** number of steps". If A is not randomized, then what's the point of having an oblivious scheduler and adaptive player?}
%\Naama{This part isn't random, but also doesn't guarantee success in the tryLocks. I removed the `wait-free' part, and the mention of the adversaries, because I think those are only relevant once we bring the randomness in.}
	\begin{theorem}\label{thm:tryLocks}
		Let $\kappa$ be the maximum point contention any single lock can experience.  Let $L$ be the maximum number of locks in the lock set of any descriptor. Let $T$ be the maximum length of a thunk.
		Algorithm~\ref{alg:lock} provides %wait-free 
		fine-grained locks %against an oblivious scheduler and an adaptive player 
		such that the number of steps per tryLock attempt is $O(\kappa^2 L^2 T)$.
	\end{theorem}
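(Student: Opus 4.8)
The plan is to bound the steps contributed by each component of a single tryLock attempt and sum them, relying crucially on the fact that the \code{run} function helps other descriptors \emph{non-recursively}, so that no chain of helping can blow up the cost. First I would bound the cost of one invocation of \code{run}. Its outer loop ranges over the at most $L$ locks in the descriptor's lock set; for each it performs one \code{getSet}, costing $O(\kappa)$ by Theorem~\ref{thm:multisetTime} since a getSet touches a single active set, and since the point contention on any lock is at most $\kappa$ the returned set contains at most $\kappa$ descriptors. For each returned descriptor the inner loop does $O(1)$ work for the priority comparison and the \code{eliminate}/\code{decide} CASes, plus one \code{celebrateIfWon}, which executes an idempotent thunk of length at most $T$; by Theorem~\ref{theorem:idempotent} the idempotent version has only constant overhead, so a single execution is $O(T)$ steps. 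Hence one \code{run} call costs $O\bigl(L\cdot(\kappa + \kappa\cdot T)\bigr) = O(L\kappa T)$, with the thunk executions dominating.

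Next I would count the \code{run} calls in one attempt. The helping phase issues at most one \code{run} call per descriptor returned by \code{getSet} on each of the $L$ locks, i.e. at most $L\kappa$ calls, plus the single self-call \code{run(p)}; so there are $O(L\kappa)$ invocations. Multiplying by the $O(L\kappa T)$ per-call cost yields $O(\kappa^2 L^2 T)$ for all helping and self-competition work. The remaining structure operations are cheap: \minsert{} and \mremove{} each touch the $L$ locks and cost $O(\kappa)$ per active set by Theorem~\ref{thm:multisetTime}, for $O(\kappa L)$ total, which is subsumed.

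Finally I would handle the two \code{Delay} statements, which are the reason the bound is not smaller. The delay inside \code{setFlag} pads the attempt to a \emph{total} of $T_0 = c\kappa^2 L^2 T$ steps, and the delay at the end of \code{tryLocks} pads to $T_1 = c'\kappa L T$ steps since the previous delay. For the upper bound the key point is that these are fixed thresholds: after the first delay the running total is $\max(\text{work so far},\, T_0)$, and both the helping work and $T_0$ are $O(\kappa^2 L^2 T)$, so the total stays $O(\kappa^2 L^2 T)$; likewise the work after the first delay (one \code{run(p)}, the \mremove{}, and the padding to $T_1$) is $O(\kappa L T)$ and hence subsumed. Summing, every attempt takes $O(\kappa^2 L^2 T)$ steps whether it succeeds or fails, exactly as claimed.

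The hard part will be justifying that \code{run} truly does only $O(L\kappa T)$ work, which rests on two invariants I would isolate and verify. The first is that an active set never holds more than $\kappa$ descriptors, so each \code{getSet} really returns $O(\kappa)$ items: a descriptor sits in a lock's active set only between its \minsert{} and \mremove{}, an interval contained in its live interval, so at any instant the set size is at most the point contention $\kappa$. The second is that helping does not recurse: \code{run(p')} only \code{eliminate}s or \code{celebrateIfWon}s the descriptors it encounters rather than calling \code{run} on them, cutting off the unbounded recursive helping that prevents prior lock-free locks from having a per-attempt step bound. With these two facts in place, the arithmetic above closes the bound.
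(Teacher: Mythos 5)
Your proposal is correct and follows essentially the same accounting as the paper's proof: $O(\kappa L)$ invocations of \code{run}, each costing $O(\kappa L T)$ dominated by thunk executions, with \code{getSet}, \minsert{}, and \mremove{} costs subsumed. Your explicit treatment of the two delay statements (which the paper's proof leaves implicit) and the two supporting invariants is a welcome but not substantively different elaboration.
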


\begin{proof}
	It is easy to see this theorem holds by observing the tryLock and the multi active set algorithms. Each call to getSet takes an number of steps linear in $\kappa$, and each \minsert{} and \mremove{} takes $O(\kappa L)$ steps. Each instance of the \code{run} method calls getSet $O(L)$ times, and executes $O(T)$ steps for each \descriptor{} in the resulting sets. Since the \code{run} method is called $O(\kappa L)$ times in a tryLock, this leads to the total step complexity of $O(\kappa^2 L^2 T)$.
\end{proof}

\paragraph{Delays.} The algorithm as described thus far captures the essence of the approach; clear out any competitors whose priorities could have had an effect on your start time, and then compete by inserting yourself into the active sets of your locks and comparing your priority to all others. However, it also has weak points that the adversary can exploit to skew the priority distribution of the competitors of certain \descriptor{s}. In particular, a \descriptor{} $p$ takes a variable amount of its own steps to get to its reveal point, and a variable amount of steps after that to finish its attempt. This variance is caused by the amount of contention it experiences -- how many \descriptor{s} are accessing the active set or are in it when $p$ accesses the same active set, and what are their priorities. The number of other \descriptor{s} affect the time its insertion into the active sets takes (as shown in Section~\ref{sec:activeset}), as well as the number of \descriptor{s} it must compare its priority to. Furthermore, if $p$ runs a \descriptor{'s} thunk, this could take longer than if it simply eliminated it. The adversary can use this variance to skew the distribution of priorities of \descriptor{s} that $p$ competes against. 

%\guy{[Should pull out the time bounds as a theorem.]}

To avoid this, we inject \emph{delays} at two critical points in the algorithm. The first is immediately before $p$'s reveal step. The goal is to ensure that $p$ always takes a fixed number of steps from its start time until its reveal step. This means that once the adversary chooses to start $p$, it has also chosen its reveal time, and cannot modify this after discovering more information.\footnote{For example, after $p$'s start time, it's possible that some descriptor that started before $p$ reaches its reveal time. At this point, the adversary has more information about $p$'s competitors. It can attempt to extend $p$'s time before its reveal step by starting new descriptors and forcing $p$ to help them. We want to avoid this possibility.} To achieve this goal, we choose a fixed number of steps until $p$'s reveal step that is an upper bound on the amount of time $p$ can take to arrive at its reveal step; $T_0 = c \cdot \kappa^2 L^2 \cdot T$, where $\kappa$ is the maximum point contention on any lock, $L$ is the maximum number of locks per tryLock attempt, $T$ is the maximum number of steps to run a single thunk, and $c$ is any any sufficiently large constant. % to cover local instructions.
%\emph{any} \descriptor{} can experience across all of its locks. This corresponds to the time it takes $p$ to help all \descriptor{s} on each of its locks (up to $\kappa$), each of them having to compete with all \descriptor{}s on all of their locks (also up to $\kappa$). \Naama{Need to decide on good notation and definitions for point contention and align everything.} This includes the time $T$ that it would take to run each of their thunks if they win. The $c$ used in the delay expression is any sufficiently large constant to cover local instructions and the time for $p$ to insert itself into all of its locks.

Similarly, we introduce a delay after $p$'s run to ensure that the time between its reveal step and termination is also determined at its invocation. Here, there is no need to square $\kappa$ and $L$, since $p$ only needs to execute \code{run} for itself after its reveal step. Therefore, $T_1 = c \cdot \kappa L T$ for some sufficiently large constant $c$. 
%We note that these bounds can be made slightly better by summing up over the maximum contention bounds of all locks, instead of taking their maximum $C$.

 It is important to note that delay is in terms of steps for a particular process.
The scheduler can run different processes at very different rates, so the delay counted in total number of steps across all processes in the history on one process could be very different than on another depending on the scheduler.
  
\subsection{Safety and Fairness}

%In Appendix~\ref{sec:lockProofs} we present full proofs of the safety and fairness guarantees provided by the algorithm. Here, we give a brief overview of the key technical points.

We show that Algorithm~\ref{alg:lock} is correct by showing that it satisfies the mutual exclusion with idempotence property (Definition~\ref{def:mutex}). The key to its correctness is in the way that the \code{run} function works. In particular, a \descriptor{}'s status can change at most once. Furthermore, \code{celebrateIfWon} never actually runs a thunk unless it status is \code{won} (at which point it cannot lose anymore). Since its status can become \code{won} only in Line~\ref{line:decide}, after it compares its priority to that of the \descriptor{}s on all of its locks, and also celebrates any winners out of these \descriptor{s}, by the time it celebrates for itself, the thunks of any earlier winners on any of its locks have already been executed. Thus, the placements of the celebrations (once on Line~\ref{line:celebrateOther} for its competitors, and once on Line~\ref{line:runCelebrateSelf} for itself) are crucial for the safety of the algorithm. The full safety proof for the algorithm is presented in the full version of the paper.

We now focus on proving the fairness guarantees of the algorithm. In essence, we show that the adversary's power is quite limited. In particular, the adversary must decide whether or not two \descriptor{s} could threaten each other (i.e. their priorities could be compared in Line~\ref{line:compare}) before learning 
any information on either of their priorities. Intuitively, this is due to two main reasons.

The first is because of the helping mechanism. Before a \descriptor{} $p$ reveals its priority, it puts all \descriptor{s} whose priority was already revealed in a state in which they can no longer threaten it -- their status becomes non-active. 

To show this property formally, we begin with establishing some terminology; we say a \descriptor{} $p$ \emph{causes a \descriptor{} $p'$ to fail} if $p'$'s status is changed to \code{lost} in the \code{eliminate(p')} call on Line~\ref{line:eliminate1} or~\ref{line:eliminate2} after comparing $p'$'s priority with $p$'s priority. We say a \descriptor{} $p$ \emph{can cause $p'$ to fail} if $p$ and $p'$'s priorities are compared on Line~\ref{line:eliminate1} or~\ref{line:eliminate2} during the execution. We can now discuss when \descriptor{s} can and cannot cause each other to fail, in the next two useful lemmas.

\begin{lemma}\label{lem:differentLocks}
	A \descriptor{} $p$ cannot cause a \descriptor{} $p'$ to fail if their lock sets do not intersect.
\end{lemma}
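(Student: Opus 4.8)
The plan is to prove the contrapositive directly from the structure of the code: I will show that whenever the priorities of two \descriptor{s} are compared on Line~\ref{line:compare} (leading to an \code{eliminate} on Line~\ref{line:eliminate1} or~\ref{line:eliminate2}), their lock sets must intersect. The whole argument rests on two simple structural observations about where comparisons can occur and about which active sets a \descriptor{} can appear in.

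First I would pin down where a priority comparison between two \descriptor{s} $p$ and $p'$ can happen. Such a comparison occurs only inside some call \code{run($r$)}, while $r$ iterates over a lock $\ell \in r.lockList$ and processes a \descriptor{} $q$ drawn from the set returned by \code{getSet($\ell$)} on Line~\ref{line:runGetSet}; the comparison is then between $r$ and $q$. Thus, for $p$ and $p'$ to be compared, we must have $\{r,q\} = \{p,p'\}$, and the witnessing lock $\ell$ satisfies $\ell \in r.lockList$ by construction of the loop.

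Second, I would establish the key invariant that a \descriptor{} $q$ can appear in \code{getSet($\ell$)} only if $\ell \in q.lockList$. This follows from the multi active set code (Algorithm~\ref{alg:multiactive}): a \descriptor{} is placed into the underlying active set of a lock $\ell$ exactly by the \minsert{} on Line~\ref{line:insert}, which inserts it only into the active sets of the locks in its own \code{lockList} (the collection argument). Since \code{getSet($\ell$)} returns only items present in the active set of $\ell$, membership in \code{getSet($\ell$)} forces $\ell$ to lie in the \descriptor{}'s lock set. Applying this to $q$ above gives $\ell \in q.lockList$.

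Combining the two observations finishes the proof: the witnessing lock $\ell$ lies in $r.lockList$ (from the first step) and in $q.lockList$ (from the second step), so $\ell \in p.lockList \cap p'.lockList$ regardless of which of $p,p'$ plays the role of $r$. Hence a comparison between $p$ and $p'$ implies $p.lockList \cap p'.lockList \neq \emptyset$, and contrapositively, if their lock sets are disjoint no comparison ever occurs and $p$ cannot cause $p'$ to fail. I do not expect a genuine obstacle here; the only care needed is to handle both assignments of $\{p,p'\}$ to $\{r,q\}$ symmetrically and to cite the \minsert{} structure precisely enough that the ``appears only in its own locks'' invariant is airtight.
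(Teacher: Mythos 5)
Your argument is correct and matches the paper's proof: both rest on the same two structural observations, namely that \code{run} only iterates over the locks in the argument descriptor's lock set, and that a descriptor is inserted (via \minsert{} on Line~\ref{line:insert}) only into the active sets of its own locks. Your write-up is simply a more explicit, contrapositive rendering of the same reasoning.
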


\begin{proof}
	A \descriptor{} $p$ is only inserted into the sets corresponding to locks in its lock set (Line~\ref{line:insert}). Furthermore, in \code{run(p)}, only the sets of locks in $p$'s lock sets are compared against. Therefore, a \descriptor{} $p'$ will never be compared against and potentially eliminated by a \descriptor{} $p$ if their lock sets do not intersect.
\end{proof}

\begin{lemma}\label{lem:endThreat}
	A \descriptor{} $p$ cannot cause a \descriptor{} $p'$ to fail if $p$'s status stopped being \code{active} before $p'$'s reveal step. 
\end{lemma}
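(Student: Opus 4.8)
The plan is to argue that any comparison of $p$'s and $p'$'s priorities that could flip $p'$ to \code{lost} is guarded by a \code{status == active} test that is evaluated only \emph{after} $p'$'s reveal step, at which point $p$ is already non-active and the guard fails. Two facts underlie this. First, as noted in the safety discussion, a \descriptor{}'s status changes at most once, so once $p$ ceases to be \code{active} it remains non-active thereafter. Second, I would establish the invariant that every invocation \code{run($q$)} begins after $q$'s reveal step: \code{run($q$)} is called either by $q$ itself in \code{tryLocks} right after \code{multiInsert($q$)}, whose \code{setFlag} performs the reveal step, or by a helper that obtained $q$ from a \code{getSet} on Line~\ref{line:getHelpSet}; in the latter case $q$ is returned only because \code{getFlag($q$)} read a positive priority, which by Algorithm~\ref{alg:multiactive} can happen only after $q$'s reveal step, since before it the priority is $-1$.

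Next I would unfold the definition of ``$p$ causes $p'$ to fail.'' The only comparison of priorities is on Line~\ref{line:compare}, and it can flip $p'$ to \code{lost} along exactly two code paths: Line~\ref{line:eliminate1} inside \code{run($p$)}, where $p'$ is a set element drawn from a shared lock's \code{getSet}; or Line~\ref{line:eliminate2} inside \code{run($p'$)}, where $p$ is the set element and $p'$ is the argument being eliminated. It therefore suffices to rule out both paths.

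For the first path, $p'$ can be returned by the \code{getSet} on Line~\ref{line:runGetSet} inside \code{run($p$)} only after $p'$'s reveal step, by the same flag argument as above; hence the guard \code{p.status == active} on Line~\ref{line:checkActive}, evaluated after that \code{getSet} returns, is tested after $p'$'s reveal step. Since $p$ became non-active before $p'$'s reveal step and status is monotone, the guard fails, the inner loop is skipped, and Line~\ref{line:eliminate1} is never reached. For the second path, the run-after-reveal invariant gives that \code{run($p'$)} begins after $p'$'s reveal step, so the inner guard testing the status of the set element (here $p$) is evaluated after $p'$'s reveal step, where $p$ is already non-active; again the comparison on Line~\ref{line:compare} and the ensuing Line~\ref{line:eliminate2} are skipped. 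Both paths being excluded, $p$ cannot cause $p'$ to fail.

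I expect the run-after-reveal invariant, and its dependence on set regularity, to be the delicate step. One must check that the \code{getFlag} filtering in Algorithm~\ref{alg:multiactive}'s \code{getSet} genuinely forces any returned \descriptor{}'s reveal step to precede the \code{getSet} response, even though the underlying active-set insertion happens before \code{setFlag}; the hazard to exclude is a \descriptor{} being visible in the active set yet with an unset flag. Once this timing fact is pinned down, the at-most-once status change makes the two-case analysis routine.
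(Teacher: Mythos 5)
Your proof is correct and follows essentially the same route as the paper's: both arguments combine the monotonicity of status (once non-\code{active}, always non-\code{active}), the fact that any priority comparison involving $p'$ can only occur after $p'$'s reveal step (because \code{getSet} filters on the flag and \code{run(p')} is only invoked after the reveal), and the \code{active}-status guards that then skip every comparison with the already-inactive $p$. Your case split by code path (Line~\ref{line:eliminate1} versus Line~\ref{line:eliminate2}) is simply a more explicit organization of the paper's temporal before/after-reveal-step split.
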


\begin{proof}
	First note that before $p'$'s reveal step, no \descriptor{} will eliminate $p'$, since its priority will be negative, and the comparison with it will be skipped on Line~\ref{line:compare}. Therefore, $p$ cannot cause $p'$ to fail before $p'$'s reveal step. By Lemma~\ref{lem:stableStatus}, $p$'s status will never be \code{active} again after it stops being \code{active}. Therefore, in any \code{run(p)} or \code{run(p')} call, the comparison of $p$ with any other \descriptor{} will be skipped on Line~\ref{line:checkActive} after $p$ becomes inactive. In particular, this comparison will always be skipped after $p'$'s reveal step.
\end{proof}

Equipped with the above two lemmas, we can now show that the helping mechanism has the effect we want; \descriptor{s} whose intervals only overlap before one of their reveal steps cannot cause each other to fail.
%The following lemma captures this property, where we say that a \descriptor{} $p$ causes another \descriptor{} $p'$ to fail if their priorities were compared on Line~\ref{line:compare} and $p$'s priority was higher. 

\begin{lemma}\label{lem:startThreat}
	Let $p$ and $p'$ be  \descriptor{s} such that $p$'s tryLock starts after $p'$'s reveal step. Neither \descriptor{} can cause the other to fail.
	%A process $p$ cannot cause a process $p'$ to fail if $p$ starts its tryLock attempt after $p'$'s reveal step.
\end{lemma}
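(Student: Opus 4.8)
The plan is to show that, under the hypothesis, the priorities of $p$ and $p'$ are never compared on Line~\ref{line:compare}; by the definition of ``can cause to fail,'' this immediately yields that neither \descriptor{} can cause the other to fail. First I would dispose of the easy case: if the lock sets of $p$ and $p'$ are disjoint, Lemma~\ref{lem:differentLocks} finishes the argument, so from here on I assume they share at least one lock~$\ell$.

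The crux is a single timing claim: \emph{$p'$'s status stops being active strictly before $p$'s reveal step}. To prove this I would combine the helping phase with set regularity. Since $p$'s tryLock starts after $p'$'s reveal step, every \code{getSet} that $p$ performs on Line~\ref{line:getHelpSet} during its helping phase is invoked after $p'$'s reveal step; by set regularity (Theorem~\ref{thm:multisetCorrectness}), such a \code{getSet} on the shared lock $\ell$ returns $p'$ unless $p'$ has already been removed. In the first case $p$ calls \code{run}$(p')$, whose concluding \code{decide}$(p')$ (Line~\ref{line:decide}) leaves $p'$ non-active; this occurs during $p$'s helping phase, before $p$'s \minsert{} (Line~\ref{line:insert}) and hence before its reveal step. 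In the second case $p'$ can only have been removed by a \mremove{}, which a \descriptor{} issues after its \code{run} has completed, so $p'$ was already decided---again before $p$'s reveal step. Either way, together with Lemma~\ref{lem:stableStatus} (a status never reverts to \code{active}), $p'$ is non-active by the time $p$ reveals.

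Given this claim I would finish as follows. Any comparison of $p$ and $p'$ on Line~\ref{line:compare} can only occur inside a \code{run}$(p)$ or a \code{run}$(p')$ call with the other \descriptor{} appearing in the set read on Line~\ref{line:runGetSet}. A \code{run}$(p)$ call can only execute after $p$'s reveal step (a helper reaches $p$ only once $p$ is visible in an active set, and $p$'s own \code{run}$(p)$ follows its \minsert); and in \code{run}$(p')$ the \descriptor{} $p$ appears in the read set only if that \code{getSet} saw $p$, which by set regularity forces the read---and hence the comparison---to follow $p$'s reveal step. So every potential comparison of $p$ and $p'$ happens after $p$'s reveal step, at which point $p'$ is already non-active; the guard on Line~\ref{line:checkActive} (for \code{run}$(p')$) or the inner status test on the set element (for \code{run}$(p)$) therefore skips the comparison. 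Hence the two priorities are never compared and neither can cause the other to fail; alternatively, for the single direction ``$p'$ cannot cause $p$ to fail'' one may simply feed the timing claim into Lemma~\ref{lem:endThreat}. I expect the timing claim---pinning down that helping plus set regularity forces $p'$ non-active before $p$ reveals, including the edge case where $p'$ has already been \mremove{}d---to be the main obstacle; everything after it is control-flow bookkeeping guarded by the $\status$ checks and Lemma~\ref{lem:stableStatus}.
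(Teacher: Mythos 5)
Your proposal is correct and follows essentially the same route as the paper's proof: dispose of disjoint lock sets via Lemma~\ref{lem:differentLocks}, split on whether $p'$ has already been removed from the active set when $p$ performs its helping \code{getSet} on Line~\ref{line:getHelpSet}, use set regularity to force $p$ to see and \code{run} $p'$, conclude that $p'$ is decided before $p$'s reveal step, and invoke Lemma~\ref{lem:endThreat} for the reverse direction. Your closing paragraph spelling out exactly where a priority comparison could occur is somewhat more explicit than the paper's, but it is the same argument.
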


\begin{proof}
	By Lemma~\ref{lem:differentLocks}, if $p$ and $p'$'s lock sets do not overlap, the lemma holds.
	Otherwise, if $p'$ has already removed itself from the active set by the time $p$ started its \code{getSet} on Line~\ref{line:getHelpSet}, then $p'$ must have already won or lost by this time. In particular, $p$ wasn't in the active set during $p'$'s \code{run(p')}, and therefore could not have caused $p'$ to fail. Furthermore, by Lemma~\ref{lem:endThreat}, $p'$ cannot cause $p$ to fail.
	
	So, assume that $p'$ is still in the active set at the time $p$ started its \code{getSet} on Line~\ref{line:getHelpSet}. 
	By the set regularity of the multi active set algorithm,
	since $p'$ must have completed its insertion into the active set before $p$ started its \code{getSet} on Line~\ref{line:getHelpSet}, $p$ must have $p'$ in the set it gets. %Furthermore, since $p'$ already revealed its priority before $p$'s \code{getSet} linearized, $p$ must see a non-negative priority for $p'$. 
	Therefore, $p$ calls \code{run(p')} before its own reveal step, so by Lemma~\ref{lem:doneStatus}, $p'$ wins or fails because of a different \descriptor{}. Furthermore, again by Lemma~\ref{lem:endThreat}, $p'$ cannot cause $p$ to fail, since a \code{run(p')} call completes before $p$'s reveal step, and therefore $p'$ is no longer active by that time.
\end{proof}

That is, in this lemma we show that if $p$ starts after $p'$'s reveal step, their priorities can never be compared. 
We can now introduce some more terminology to help us reason further about fairness. We define the \emph{interval} of a \descriptor{} $p$ as the time between its calling process's call to tryLock and the time at which its tryLock call returns. Furthermore, a \descriptor{}'s \emph{threat interval} is the time between the beginning of its interval and the time at which its status stops being \code{active}.
We define $p$'s \emph{threateners} as the set of \descriptor{s} that can cause $p$ to fail. 
We make the following observations about the relationships between \descriptor{s}.

\begin{observation}\label{obs:threat}
	The set of \descriptor{s} whose intervals overlap a \descriptor{} $p$'s reveal step includes all of $p$'s threateners.
\end{observation}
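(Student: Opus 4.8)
The plan is to prove the contrapositive: if a \descriptor{} $p'$ is a threatener of $p$, then $p'$'s interval must overlap $p$'s reveal step. Equivalently, I will show that whenever $p'$'s interval does \emph{not} contain $p$'s reveal step, $p'$ cannot cause $p$ to fail. Since $p$'s reveal step is a single point in time, an interval that misses it must lie entirely to one side of it, so this splits into exactly two cases: either $p'$'s interval ends before $p$'s reveal step, or $p'$'s interval begins after $p$'s reveal step. Both cases are essentially already handled by Lemmas~\ref{lem:endThreat} and~\ref{lem:startThreat}; the only real work is connecting the definition of a \descriptor{}'s interval to the hypotheses of those lemmas.

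For the first case, suppose $p'$'s interval begins after $p$'s reveal step, i.e., $p'$'s tryLock is invoked after $p$ has written its priority. This is precisely the hypothesis of Lemma~\ref{lem:startThreat} applied with the roles of the two \descriptor{s} swapped (taking $p'$ as the later-starting \descriptor{} and $p$ as the one whose reveal step has already passed). That lemma states that in this situation neither \descriptor{} can cause the other to fail, so in particular $p'$ cannot cause $p$ to fail, and $p'$ is not a threatener of $p$.

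For the second case, suppose $p'$'s interval ends before $p$'s reveal step. Here I would first observe that a \descriptor{}'s status is always decided \emph{within} its own interval: after a \descriptor{} returns from its own \code{run} call its status is guaranteed to be non-active (either \code{won} or \code{lost}), and this occurs strictly before its tryLock returns, i.e., before the end of its interval. Hence $p'$'s status stops being \code{active} before the end of $p'$'s interval, which by assumption is before $p$'s reveal step. This is exactly the hypothesis of Lemma~\ref{lem:endThreat}, applied with $p'$ as the potential eliminator and $p$ as the potential victim, yielding that $p'$ cannot cause $p$ to fail.

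Combining the two cases, any \descriptor{} whose interval misses $p$'s reveal step fails to be a threatener of $p$; contrapositively, every threatener's interval overlaps $p$'s reveal step, which is the claim. The mildly delicate point—and the one I expect to need the most care—is the bookkeeping in the second case: justifying that $p'$'s status necessarily leaves \code{active} before $p'$'s interval ends (so that the end of $p'$'s threat interval genuinely precedes $p$'s reveal step). This follows from the control flow of the \code{run} function together with the fact that a status change is permanent, and relies on no probabilistic reasoning. Given the two earlier lemmas, the case analysis itself is immediate.
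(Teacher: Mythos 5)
Your proof is correct and follows essentially the same route as the paper's: both reduce the claim to the facts that a \descriptor{}'s status is guaranteed to be non-\code{active} by the end of its own interval (since the interval contains a complete \code{run} call) and then invoke Lemmas~\ref{lem:endThreat} and~\ref{lem:startThreat} for the two ways an interval can miss $p$'s reveal step. Your version merely makes the case split and the role-swapping in the lemma applications explicit, which the paper leaves implicit.
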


\begin{proof}
	Every \descriptor{} $p$'s interval includes a complete call to \code{run(p)}. Thus, by Lemma~\ref{lem:doneStatus}, the status of any \descriptor{} is not \code{active} by the end of its interval. The lemma is therefore immediately implied from Lemmas~\ref{lem:endThreat} and~\ref{lem:startThreat}.
\end{proof}

\begin{observation}\label{obs:startPriority}
	At the time at which a \descriptor{}'s interval starts, none of its threateners have reached their reveal step.
\end{observation}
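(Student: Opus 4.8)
The plan is to obtain this observation as an immediate contrapositive of Lemma~\ref{lem:startThreat}, arguing by contradiction. Fix a \descriptor{} $p$ and let $t$ be the time its interval starts, i.e. the point at which its calling process invokes \code{tryLocks}. Suppose, toward a contradiction, that some threatener $p'$ of $p$ has already reached its reveal step by time $t$. My aim is to show this is impossible, so that at time $t$ no threatener of $p$ can have revealed.

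The core step is to recognize that the supposed situation is exactly the hypothesis of Lemma~\ref{lem:startThreat}. Since $p'$ is a threatener of $p$, by definition $p'$ can cause $p$ to fail, meaning the priorities of $p$ and $p'$ are compared on Line~\ref{line:eliminate1} or~\ref{line:eliminate2} at some point in the execution. On the other hand, because $p'$'s reveal step precedes time $t$, the first step of $p$'s interval occurs after $p'$'s reveal step; that is, $p$'s \code{tryLocks} starts after $p'$'s reveal step. Applying Lemma~\ref{lem:startThreat} with these two \descriptor{s} then tells us that neither $p$ nor $p'$ can cause the other to fail. In particular $p'$ cannot cause $p$ to fail, contradicting the assumption that $p'$ is a threatener. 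Hence no threatener of $p$ can have reached its reveal step before $t$, which is precisely the claim.

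The only point requiring care—and the sole place where I expect to spend a sentence—is the timing bookkeeping. A threatener $p'$ is necessarily distinct from $p$, and each position in the schedule carries a single instruction, so $p'$'s reveal step and the first step of $p$'s interval occupy distinct positions in the total order; thus ``$p'$ has reached its reveal step by the start of $p$'s interval'' is equivalent to ``$p'$'s reveal step strictly precedes the start of $p$'s interval,'' which matches the strict ``after'' appearing in the hypothesis of Lemma~\ref{lem:startThreat}. Beyond settling this, there is no real obstacle: the substantive content—that the helping phase neutralizes every already-revealed competitor before $p$ reveals, so lock-set overlap and the set-regularity of the multi active set are what truly drive the conclusion—has already been absorbed into Lemma~\ref{lem:startThreat}. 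This observation merely re-packages that lemma in terms of threateners and reveal steps, and so I expect the formal proof to be only a few lines.
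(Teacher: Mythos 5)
Your proposal is correct and matches the paper's intent exactly: the paper states Observation~\ref{obs:startPriority} without proof, treating it as the immediate contrapositive of Lemma~\ref{lem:startThreat} combined with the definition of a threatener as a \descriptor{} that can cause $p$ to fail. Your contradiction argument (and the brief remark on strict temporal ordering) is precisely the reasoning the paper leaves implicit.
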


%\begin{proof}
%	This is immediate from Lemma~\ref{lem:startThreat}.
%	%	\Naama{Maybe change definition of threatener to be anyone whose priority gets compared to that of $p$, and change Obs~\ref{obs:threat} to say that the set whose intervals overlap with $p$'s reveal step includes all threateners. Shouldn't have to change lemma statements/proofs much, and might be easier to talk about.}
%\end{proof}
%

The other property of the algorithm which ensures that the adversary cannot pit \descriptor{s} against one another after knowing their priorities stems from the delays in the algorithm. In particular,

\begin{observation}\label{obs:constantSteps}
	Each \descriptor{} interval takes the same number of steps by the initiating process between its start and its reveal step, and between its reveal step and the end of its interval, regardless of the schedule or randomness.
\end{observation}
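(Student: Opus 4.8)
The plan is to show that both quantities are pinned to fixed values by the two explicit delays, and that each delay is always the binding constraint. The core mechanism is that the delay ``until $T_0$ total steps taken'' (line~\ref{line:delay1}) and the delay ``until $T_1$ steps taken since previous delay'' (line~\ref{line:delay2}) count \emph{the initiating process's own steps}, which, as the paper already emphasizes, are unaffected by how the scheduler interleaves the other processes.

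First I would bound, uniformly over all schedules and all realizations of the random priorities, the number of steps the initiating process takes from the start of its interval until it reaches the delay on line~\ref{line:delay1}. This is exactly the accounting behind Theorem~\ref{thm:tryLocks}: the helping loop performs $O(L)$ getSet calls (each $O(\kappa)$ steps) and $O(\kappa L)$ calls to \code{run}, each costing $O(\kappa L T)$ steps, so the entire pre-reveal segment costs at most $c_1 \kappa^2 L^2 T$ steps for some universal constant $c_1$. The crucial point is that this is a \emph{worst-case} per-execution bound: the schedule may change how many \descriptor{s} appear in a given getSet, and the randomness may change which branch of the comparison on line~\ref{line:compare} is taken and whether a thunk is celebrated, but each of these can only push the count up to the stated bound. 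Choosing the constant $c$ in $T_0 = c \kappa^2 L^2 T$ at least as large as $c_1$ then guarantees that in every execution the process reaches line~\ref{line:delay1} having taken at most $T_0$ of its own steps.

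Given this, the delay on line~\ref{line:delay1} never degenerates into a no-op: the process stalls until it has taken exactly $T_0$ own-steps, after which the reveal step (\code{priority = rand}) follows deterministically. Hence the step count from the start of the interval to the reveal step is a fixed value determined by $T_0$, independent of schedule and randomness. I would then repeat the argument for the second segment: after the reveal step the process executes only \code{run} on itself (costing $O(\kappa L T)$ steps, since it loops over its $L$ locks, getSets each, and compares against at most $\kappa$ \descriptor{s} with thunks of length $T$) followed by a \mremove{} (costing $O(\kappa L)$ steps). This is at most $c_2 \kappa L T$ for some constant $c_2$, so once $c'$ in $T_1 = c' \kappa L T$ exceeds $c_2$, the delay on line~\ref{line:delay2} again stalls the process until exactly $T_1$ steps have elapsed since the first delay, making the reveal-to-end count a fixed $T_1$.

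The main obstacle is the uniformity of the worst-case step bound: the whole observation collapses unless one verifies that in \emph{every} schedule and for \emph{every} realization of the priorities the pre-delay work stays below the relevant threshold, so that the delay, rather than the variable work, dictates the step count. This amounts to checking that the step-complexity accounting of Theorem~\ref{thm:tryLocks} is a genuine per-execution worst-case bound (not merely an expectation) and matching the two constants $c, c'$ against the thresholds $T_0, T_1$. Once that is in place, the equality of the two step counts across all executions is immediate, since both are forced to their respective constants $T_0$ and $T_1$.
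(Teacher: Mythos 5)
Your proposal is correct and matches the paper's reasoning: the paper justifies this observation via the preceding ``Delays'' discussion, which is exactly your argument that the worst-case per-execution step accounting behind Theorem~\ref{thm:tryLocks} bounds the variable work below $T_0$ (resp.\ $T_1$) for a sufficiently large constant, so each delay is always the binding constraint and pins both segment lengths to fixed values. You correctly identify the one point that needs checking, namely that the bound is a uniform worst case over all schedules and priority realizations rather than an expectation.
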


%\begin{proof}
%	The property is enforced by the delays (Lines~\ref{line:delay1} and~\ref{line:delay2}).
%\end{proof}

Together, the helping mechanism and the delays allow us to prove the main lemma for the fairness the argument. This lemma relies on the notion of \emph{potential threateners}. 
We say  that a \descriptor{} $p$ is a \emph{potential threatener} of another \descriptor{} $p'$ if (1) $p$'s interval overlaps with $p'$'s reveal step, and (2) $p'$ did not execute a \code{run(p)}. Note that by Observation~\ref{obs:threat} and Lemma~\ref{lem:endThreat}, the set of potential threateners of a \descriptor{} is a superset of its actual threateners. 

%We define the set of the potential threateners of a \descriptor{} $p$ as the set of \descriptor{s} whose priorities \emph{might} be compared to $p$'s priority in the execution. These are the \descriptor{s} whose tryLock intervals were ongoing when $p$ reveals its priority and whose lock sets overlap $p$'s lock set (therefore they might appear in $p$'s active set), but which did not reveal their priorities before $p$ started its interval (since by Lemma~\ref{lem:threatStartcopy}, these cannot harm $p$).

\begin{lemma}\label{lem:pairwiseIndep}
	%The event that $p$ is a potential threatener of $p'$ in an execution $E$ is independent of both $p$ and $p'$'s priorities. 
		The player adversary has no information on the priorities of $p$ or $p'$ at the time at which it makes $p'$ threaten $p$.
\end{lemma}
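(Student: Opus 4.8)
The plan is to read ``the adversary makes $p'$ threaten $p$'' as the moment at which the player adversary's choices pin down that $p'$ is a potential threatener of $p$, and to show that this moment necessarily precedes both reveal steps. Since the adversary's only levers are the start times and lock sets of \descriptor{s}, and since the scheduler is oblivious, I would argue that the single decision that fixes the threat relationship is the start of whichever of $p,p'$ is launched second; write $t=\max(s_p,s_{p'})$ for that start time, where $s_p,s_{p'}$ are the two start times and $r_p,r_{p'}$ the two reveal-step times. The whole argument reduces to establishing (i) that the threat status is determined by the pair $(s_p,s_{p'})$ together with the fixed schedule and cannot be altered by any later adaptive choice, and (ii) that $t$ occurs no later than $\min(r_p,r_{p'})$. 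Independence of the priorities then follows immediately, since priorities are only drawn at the reveal steps.

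For step (ii) I would bound $t$ against each reveal step separately. That $t$ precedes the reveal step of the second-started \descriptor{} is immediate from the delay $T_0$ and Observation~\ref{obs:constantSteps}: a \descriptor{} takes a fixed number of its own steps from its start to its reveal, so its reveal strictly follows its start. That $t$ precedes the reveal step of the first-started \descriptor{} is exactly where the helping mechanism enters. If $p'$ is the second to start, then $p'$ being a potential threatener of $p$ requires (by clause~(1)) that $p'$'s interval contain $p$'s reveal step, which forces $s_{p'}\le r_p$. If instead $p$ is the second to start, I would rule out $s_p>r_{p'}$ using the helping argument behind Lemma~\ref{lem:startThreat}: in that case $p'$ has already revealed by the time $p$ performs its helping \code{getSet}, $p'$ is still in the active set (a \descriptor{} is removed only at the very end of its own interval, and clause~(1) forces $p'$'s interval to reach $r_p>s_p$), so by set regularity $p$ sees $p'$ and executes \code{run(p')}, violating clause~(2). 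Hence $t\le\min(r_p,r_{p'})$ in both orderings.

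Step (i) is the delicate part and the one the delays are engineered for. I would verify that neither clause of ``$p'$ is a potential threatener of $p$'' can be flipped by any decision made after $t$. Clause~(1), that $p'$'s interval overlaps $p$'s reveal step, depends only on the two intervals and $r_p$; by Observation~\ref{obs:constantSteps} these are fixed numbers of own-steps, and since the oblivious schedule is fixed in advance, their real-time placement is a function of $s_p,s_{p'}$ alone. Clause~(2), that $p$ did not execute \code{run(p')}, likewise depends only on whether $p'$ is visible to $p$'s helping \code{getSet}, which is governed by the relative timing of $p'$'s \minsert{}/reveal and $p$'s read; crucially, no later-started \descriptor{} $q$ can remove $p'$ from an active set (helpers never call \mremove{}, only the owning process does), so launching further \descriptor{s} after $t$ cannot change whether $p$ helps $p'$. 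This is precisely the property that would fail without the fixed delays, since otherwise the adversary could lengthen $p$'s pre-reveal phase after learning new information.

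Finally, combining (i) and (ii): the adversary commits the threat relationship at time $t\le\min(r_p,r_{p'})$, and its state there is a deterministic function of the history up to $t$. Because the scheduler is oblivious and thus independent of the random priorities, and because $p$'s and $p'$'s priorities are sampled uniformly and independently only at $r_p$ and $r_{p'}$, the history up to $t$ carries no information about either priority, which gives the claim. I expect the main obstacle to be the rigorous form of step (i) --- arguing that the entire adaptive trajectory of the adversary after $t$ leaves the $p$--$p'$ threat status invariant --- since this is exactly the information-leakage subtlety (bias through variable step counts) that the delays and the helping phase were introduced to close.
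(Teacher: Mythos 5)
Your proof follows essentially the same route as the paper's: Observation~\ref{obs:constantSteps} fixes the reveal and termination steps as a function of the start time alone, the helping phase (via Lemma~\ref{lem:startThreat}) forces both \descriptor{s} to have started before either reveal step, and priorities are only drawn at reveal steps, so the adversary's commitment point carries no information about them. Your write-up is a more detailed elaboration of the same argument (the paper compresses your steps (i) and (ii) into two sentences), and your one slightly imprecise claim --- that a \descriptor{} is removed from the active sets only at the very end of its interval, when in fact the \mremove{} precedes the final delay on Line~\ref{line:delay2} --- is an imprecision that the paper's own terse proof shares rather than resolves.
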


\begin{proof}
	By Observation~\ref{obs:constantSteps}, once a \descriptor{} starts, its reveal step and last step of its interval are determined. Since these two steps are what determines whether a \descriptor{} will be a potential threatener of  another \descriptor{}, the start times of the two \descriptor{s} determine whether this event occurs. 
	Furthermore, by Lemma~\ref{lem:startThreat} and the definition of potential threateners, if $p$ is a potential threatener of $p'$ in an execution $E$, then both $p$ and $p'$ must have started their tryLock interval before either of their reveal steps. Therefore, the player adversary had no information on their priorities at the time at which it decided to start their intervals. 
	%Therefore the event that $p$ is a potential threatener of $p'$ is independent of both of their priorities.
\end{proof}

The final theorem is easily implied from this lemma by recalling that the choice of priorities of each \descriptor{} is always done uniformly at random and independently of the history so far. Thus, the adversary can choose whether or not to introduce more threateners for a \descriptor{} $p$, but cannot affect their priorities. Since there is a bound on the amount of contention the adversary can introduce, we get a bound on $p$'s chance of success.

\begin{theorem}\label{thm:probSuccess}
	Let $k_\ell$ be the bound on the maximum point contention possible on lock $\ell$, and let $C_p = \sum_{\ell \in p.lockList} k_\ell$ be the sum of the bounds on the point contention across all locks in a \descriptor{} $p$'s lock list. 
	Algorithm~\ref{alg:lock} provides wait-free fine-grained locks against an oblivious scheduler and an adaptive player such that the probability that $p$ succeeds in its tryLock in $A$ is at least $\frac{1}{C_p}$.
%	Then there exists an algorithm for wait-free fine-grained locks against an oblivious scheduler and an adaptive player such that the probability that $p$ succeeds in its tryLock is at least $\frac{1}{C_p}$.
\end{theorem}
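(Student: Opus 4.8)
The plan is to reduce success to a single priority comparison, bound the number of relevant competitors through point contention, and then use the independence and delay guarantees already proved to neutralize the adversary. First I would observe that, by the behavior of \code{eliminate} and \code{decide} together with the fact (used in the safety argument) that a \descriptor{}'s status can change at most once, a \descriptor{} $p$ succeeds if and only if its priority exceeds the priority of every \descriptor{} whose priority is ever compared with it, i.e.\ every one of $p$'s threateners. Since the threateners of $p$ are contained in its set of potential threateners $\Pi$ (by Observation~\ref{obs:threat} and Lemma~\ref{lem:endThreat}), it suffices to show that $p$'s priority exceeds the priorities of all \descriptor{s} in $\Pi$ with probability at least $1/C_p$. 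Writing $X_q$ for the priority drawn at a \descriptor{} $q$'s reveal step, the goal becomes $\Pr[X_p > \max_{q \in \Pi} X_q] \ge 1/C_p$.

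Next I would bound $|\Pi|$. By Lemma~\ref{lem:differentLocks} every potential threatener of $p$ shares a lock with $p$, and by definition each one is live at $p$'s reveal step (its interval overlaps that step). For a fixed lock $\ell \in p.lockList$, at most $k_\ell$ \descriptor{s} are simultaneously live on $\ell$, and one of them is $p$ itself (which is inserted into all of its locks before its flag is set), so at most $k_\ell - 1$ other \descriptor{s} are live on $\ell$ at $p$'s reveal step. Summing over $p$'s locks and noting that a \descriptor{} sharing several locks with $p$ is only overcounted, I get $|\Pi| \le \sum_{\ell \in p.lockList}(k_\ell - 1) \le C_p - 1$, hence $|\{p\} \cup \Pi| \le C_p$.

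The heart of the argument is to show that $X_p$ is independent of the pair $(\Pi,\ (X_q)_{q \in \Pi})$. Here I would invoke the delays: by Observation~\ref{obs:constantSteps} the reveal step and the interval endpoints of every \descriptor{} are fixed, in its own step count, at its start, so whether a \descriptor{} $q$ belongs to $\Pi$ is determined entirely by the start times and the oblivious schedule, and in particular by events strictly before $p$'s reveal step. Since $X_p$ is drawn afresh exactly at that reveal step, changing $X_p$ cannot change $\Pi$; more generally the event $\{\Pi = S,\ (X_q)_{q \in S} = (x_q)\}$ is measurable with respect to the priorities of \descriptor{s} other than $p$, which are independent of $X_p$. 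This is exactly the content of Lemma~\ref{lem:pairwiseIndep} and Observation~\ref{obs:startPriority}: the adversary commits to each threat relationship without any information about $X_p$ or about the threatener's own priority.

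Finally I would combine this independence with a stochastic-domination bound, and this last step is the one I expect to be the main obstacle. The subtlety is that one cannot simply condition on $\Pi = S$ and appeal to symmetry, because the \emph{composition} of $\Pi$ may legitimately depend on the already-revealed priorities of its own members --- the adversary is allowed to start a new threatener precisely because some earlier threatener happened to draw a high priority --- so conditioning on $\Pi = S$ distorts the distribution of the members' priorities. Lemma~\ref{lem:pairwiseIndep} gives only that each individual membership decision is blind to that member's own priority and to $X_p$. To turn this predictable blindness into a usable bound, I would model the members of $\Pi$ as a predictable ``blind selection'' of at most $C_p-1$ draws from the i.i.d.\ uniform stream of priorities and prove, by induction on the selection budget, that $\Pr[\max_{q \in \Pi} X_q > t] \le 1 - t^{\,C_p-1}$, i.e.\ that $\max_{q \in \Pi} X_q$ is stochastically dominated by the maximum of $C_p - 1$ independent uniforms. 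Combined with the independence of $X_p$ from the previous paragraph, this gives $\Pr[X_p > \max_{q \in \Pi} X_q] = 1 - \mathbf{E}[\max_{q \in \Pi} X_q] \ge 1 - \tfrac{C_p-1}{C_p} = \tfrac{1}{C_p}$, which by the reduction in the first paragraph completes the proof.
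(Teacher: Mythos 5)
Your proposal is correct and follows essentially the same route as the paper's proof: bound the competitors of $p$ by its potential threateners, count at most $C_p$ relevant priorities via the per-lock contention bounds, and use Lemma~\ref{lem:pairwiseIndep} to argue that each threat relationship is committed to before the corresponding priority is revealed, so $p$ faces at most $C_p-1$ blindly selected i.i.d.\ uniform values and beats them all with probability at least $1/C_p$. Your final paragraph merely formalizes, via stochastic domination of an adaptively-but-blindly selected subset, the deferred-revelation game that the paper's proof states informally.
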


\begin{proof}
	On each lock $\ell$ in $p$'s lock list, the adversary can make at most $k_\ell$ \descriptor{s} be potential threateners of $p$. Assume that all priorities of the \descriptor{s} are picked uniformly at random, but the priority of a given \descriptor{} $p'$ is hidden until after the adversary chooses whether or not $p'$ will be a potential threatener of $p$. This is equivalent to our setting since the priorities are always picked uniformly at random, and, by Lemma~\ref{lem:pairwiseIndep}, the adversary has no information on a \descriptor{} $p'$'s priority until after it decides whether it will potentially threaten $p$. Once the adversary discovers the priority of a \descriptor{}, it can decide whether the next \descriptor{} will be a potential threatener of $p$ and then reveal the corresponding priority. In the worst case, the adversary can reveal up to $C_p$ of those predetermined priorities. Since the set of potential threateners of $p$ include all actual threateners of $p$, this makes $p$ threatened by $C_p$ uniformly chosen random values in the worst case, giving it a $\frac{1}{C_p}$ probability of having the maximum priority of all of them.
\end{proof}

Note that the theorem is stated using the point contention bounds of the specific locks that are in the lock set of tryLock attempt $p$. In terms of the general bounds $\kappa$ on the point contention per lock and $L$ on the number of locks in any lock set, the probability of success can be bounded from below at $\frac{1}{\kappa L}$.  Theorem~\ref{thm:probSuccess} and Theorem~\ref{thm:tryLocks} together imply Theorem~\ref{thm:main}.

\paragraph{Using the multi active set implementation.}
We note that as shown by Golab et al.~\cite{GHW11}, using implemented rather than atomic objects in a randomized algorithm can affect the probability distributions that an adversarial scheduler can produce. This effect can occur when several operations on the implemented objects are executed concurrently. Thus, we must be careful when using our set regular multi active set object in our randomized lock implementation. However, the way in which our lock algorithm uses the multi active set, and the way we use it in our analysis, is not subject to this effect. To see this, 
%first note that an inserted \descriptor{} in an active set is ignored until after its priority becomes non-negative, which happens after the insertion is complete. Similarly, a \descriptor{} in an active set is ignored by the lock algorithm if its status is not \code{active}, which must happen before its removal begins. Finally, 
note that there is slack in our analysis; we consider the set of \descriptor{s} with \emph{potential to compete}, where this means that a \code{getSet} executed by one \descriptor{} \emph{could} see the other \descriptor{}. That is, any change in priority distributions that the adversary could try to achieve is already covered by our analysis.

\subsection{Handling Unknown Bounds}

So far, we've been assuming that $\kappa$, the upper bound on the point contention of any lock, and $L$, the upper bound on the number of locks per tryLock, are known to the lock algorithm. In this subsection, we briefly outline how to handle these bounds being known to the adversary but not the algorithm. The full algorithm and analysis for this case appear in the full version of this paper.
%We now address the case in which the bound $\kappa$ and $L$ exist, are known to the adversary, but are unknown to the algorithm. %For convenience, we consider the joint bound $\kappa L$ on the total amount of contention on the locks of any tryLock attempt. 
%We make slight modifications to the locking algorithm, which are explained below. The pseudocode of the modified version is presented in Algorithm~\ref{alg:lockUnknown} in Appendix~\ref{sec:unknownBound}.

Algorithm~\ref{alg:lock} used these bounds in two ways: firstly, the active set objects were instantiated with arrays of size $\kappa$, and secondly, $\kappa$ and $L$ were used to determine how many delay steps each tryLock attempt must take to ensure that each \descriptor{}'s reveal step and final step of the attempt always happen after the same number of steps since the attempt's start time. The first concern is easy to fix by using more space; instead of setting the size of the announcement array of the active set object to $\kappa$, we set it to $P$, the total number of processes in the system. In most applications, this number is significantly larger than $\kappa$.  We note that the size of each individual set pointed at by the array slots is still at most~$\kappa$, and therefore the time bounds remain proportional to $\kappa$ as well.
%We note that since there will never be more than $C$ elements in the active set at any point in time, the last $P-C$ slots of the announcements array will remain empty, and the total space usage will become $O(C^3P)$. \Naama{verify space bounds.}

However, the second problem is more challenging, as the delays are crucial for the fairness bounds to hold. We make a few key changes to the algorithm. Firstly, must ensure that the size of the active set read on line~\ref{line:runGetSet} by an attempt $p$ is fixed before the adversary discovers the priority of $p$. To do so, we split the reveal step into two parts; the \emph{participation-reveal step}, and the \emph{priority-reveal step}. The participation-reveal step occurs after a \descriptor{} $p$ inserts itself into the active set object of each of its locks. In this step, it changes its priority from $-1$ to a special \code{TBD} value, indicating that it is ready to participate in the lock competition, but not yet revealing its priority. At this point, all locks are queried to obtain their active sets, and only then is the priority of $p$ revealed. The key insight is that after the priority is revealed, the active set objects are no longer queried, and instead the local copies of the sets, obtained just before the priority reveal step, are used. Thus, the adversary does not learn $p$'s priority until after it can no longer affect the set of $p$'s potential threateners. %, and therefore the remaining steps in $p$'s execution.
%In particular, recall that the fixed times of the reveal step and final step of each \descriptor{'s} intervals were crucial in showing Lemma~\ref{lem:pairwiseIndep}, that the priorities of two \descriptor{s} $p$ and $p'$ are independent of the event that $p$ is a potential threatener of $p'$. This property was in turn crucial to get the desired fairness bound. Without explicitly using $\kappa$ and $L$ to determine the amount of delay steps to take, it is unclear how to ensure that the adversary doesn't gain the power to choose whether two \descriptor{s} threaten each other based on their priorities.

However, there is still the issue of the steps a \descriptor{} executes before its participation-reveal step. In the first part of its execution, a \descriptor{} must help others to complete their \code{run} call, and must then call \minsert{} on itself and its lock set. The length of these tasks vary depending on the number of active \descriptor{s} in the system, and can therefore be controlled by the player adversary.
Instead of relying on $\kappa$ and $L$, we employ a \emph{doubling} trick; $p$ measures the number of steps it took until right before its participation-reveal step, and then employs a delay to bring that number up to the nearest power of two. In this way, while the adversary still has control of the number of steps $p$ will take, this number is now guaranteed to be one of only 
$\log (\kappa L T)$ values.
We arrive at the following result.

\begin{theorem}
		Let $k_\ell$ be the bound on the maximum point contention possible on lock $\ell$, $\kappa$ be an upper bound on $k_\ell$ for all $\ell$, and let $C_p = \sum_{\ell \in p.lockList} k_\ell$ be the sum of the bounds on the point contention across all locks in a \descriptor{} $p$'s lock list. Furthermore, let $L$ be the maximum number of locks per tryLock attempt, and $T$ be the maximum length of a critical section. Then there exists an algorithm $A$ for wait-free fine-grained locks against an oblivious scheduler and an adaptive player such that (1) the probability that $p$ succeeds in its tryLock is at least $\frac{1}{C_p\log (\kappa L T)}$ in $A$, and (2) $A$ that does not know the bounds $k_\ell$, $\kappa$ and $L$. % (Algorithm~\ref{alg:lockUnknown}).
\end{theorem}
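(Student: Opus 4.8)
The plan is to start from Algorithm~\ref{alg:lock} together with its analysis (Theorems~\ref{thm:tryLocks} and~\ref{thm:probSuccess}), apply the two modifications sketched above --- enlarging each active set's announcement array from $\kappa$ to $P$, and replacing the single delayed reveal step by a two-part reveal together with a doubling-based delay --- and then re-verify, in order, (i) mutual exclusion with idempotence, (ii) the $O(\kappa^2 L^2 T)$ step bound, and (iii) the fairness bound, now degraded by a $\log(\kappa L T)$ factor. The first two are routine. Safety (Definition~\ref{def:mutex}) depends only on the atomic \code{status} transitions in \code{eliminate}, \code{decide}, and \code{celebrateIfWon}, none of which are touched; the new \code{TBD} priority behaves exactly like the initial $-1$ with respect to the comparison on Line~\ref{line:compare} (it is skipped), so a \descriptor{} at \code{TBD} neither eliminates nor is eliminated, and the safety argument of Section~\ref{sec:nestedLock} carries over verbatim. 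For the step bound, enlarging the array does not change the argument of Theorem~\ref{thm:tryLocks}: each individual set still has at most $\kappa$ members, so \code{getSet} is still $O(\kappa)$; the extra round of \code{getSet}s used to take local copies adds only $O(\kappa L)$, and the doubling delay at most doubles the pre-reveal step count, leaving the total at $O(\kappa^2 L^2 T)$.

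The heart of the proof is re-establishing the independence of the threat structure from the priorities, i.e. the analog of Lemma~\ref{lem:pairwiseIndep}, under the two modifications. First I would show that the two-part reveal decouples ``whom $p$ competes against'' from ``$p$'s priority'': the set of \descriptor{s} $p$ ever compares against is exactly the contents of the local copies taken at the participation-reveal step, and this set is frozen strictly before the priority-reveal step. Consequently, re-proving the analogs of Lemmas~\ref{lem:differentLocks}, \ref{lem:endThreat}, and~\ref{lem:startThreat} with ``reveal step'' read as ``priority-reveal step'' and ``the set $p$ sees'' read as ``$p$'s local copies'', the set of potential threateners of $p$ is determined before $p$'s priority is exposed to the adversary, exactly as in Observations~\ref{obs:threat} and~\ref{obs:startPriority}.

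The remaining and main difficulty is that Observation~\ref{obs:constantSteps} no longer holds: without knowing $\kappa$ and $L$, the number of steps from $p$'s start to its participation-reveal step is controlled by the player adversary through the contention it creates. The doubling delay repairs this only up to quantization: the padded step count is one of at most $\log(\kappa L T)$ powers of two. I would therefore replace the deterministic statement of Observation~\ref{obs:constantSteps} by the statement that, conditioned on which of these $\log(\kappa L T)$ buckets $p$ falls into, $p$'s participation-reveal step (and hence its whole threat structure) is a deterministic function of $p$'s start time and that bucket index. The key point to argue is that the adversary's choice of bucket is a function of the timings and existence of other \descriptor{s} but is made without any information about $p$'s own (still \code{TBD}) priority, and that the analog of Lemma~\ref{lem:pairwiseIndep} continues to hold within each bucket: the priorities of $p$ and of any potential threatener $p'$ are still uniformly random and unrevealed at the moment the adversary commits to the bucket and to the threat relationship. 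Re-running the counting argument of Theorem~\ref{thm:probSuccess} within a fixed bucket gives $p$ at least the maximum priority among its $\le C_p$ potential threateners with probability $\ge 1/C_p$; accounting for the adversary's free choice among the $\log(\kappa L T)$ buckets --- equivalently, conditioning on the worst bucket --- costs the stated extra factor and yields success probability $\ge 1/(C_p \log(\kappa L T))$.

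I expect the main obstacle to be formalizing exactly what the adversary can and cannot learn across the doubling delay: one must rule out the adversary using the number of steps $p$ takes (which it influences) as a side channel correlated with already-revealed priorities, and confirm that quantizing the reveal time to $\log(\kappa L T)$ values is precisely what caps this leakage. Getting the conditioning clean --- so that within each bucket the priorities of $p$ and its threateners remain independent and uniform and the $C_p$ bound on potential threateners still holds --- is the delicate step, and is where I would spend the most care.
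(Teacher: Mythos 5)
Your proposal follows essentially the same route the paper takes for this theorem: enlarge the announcement arrays to size $P$, split the reveal into a participation-reveal and a priority-reveal so that each competitor set is frozen into a local copy before the priority is exposed, quantize the pre-reveal step count to one of $\log(\kappa L T)$ powers of two via the doubling delay, and then rerun the counting argument of Theorem~\ref{thm:probSuccess}. One caution on the final accounting: the $\log(\kappa L T)$ loss does not come from ``conditioning on the worst bucket'' (each bucket alone would still give $1/C_p$); rather, because the adversary can steer $p$ into a bucket after learning priorities, $p$ must hold the maximum priority over the \emph{union} of the potential-threatener sets across all buckets, a set of size at most $C_p\log(\kappa L T)$, which is what yields the stated bound.
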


	\section{Discussion}

In this paper, we present an algorithm for randomized wait-free locks that guarantees each lock attempt terminates within $O(\kappa^2 L^2 T)$ steps and succeeds with probability $1/\kappa L$ where $\kappa$ is an upper bound on the contention on each lock, $L$ is an upper bound on the number of locks acquired in each lock attempt, and $T$ is an upper bound on the number of steps in a critical section. 
We further show a version of the algorithm that does not require knowledge of $\kappa$ and $L$, where the success probability is reduced by a factor of $O(\log(\kappa L T))$.
%These results hold under the assumption that the algorithm knows the bounds $\kappa$, $L$, and $T$, and we further show a version of the algorithm that does not require knowledge of $\kappa$ and $L$, where the success probability per attempt is reduced by a factor of $O(\log(\kappa L T))$.

There are several interesting directions for further study. In particular, 
%we note that while the probability of success must be proportional to the experienced contention, 
it is possible that a lock algorithm exists that reduces the number of steps per attempt to  $O(\kappa L T)$. Furthermore, while the success probability of each attempt in our algorithm adapts to the true contention, our step bounds instead depend on the given upper bounds. It would be interesting to develop a lock algorithm that adapts its step complexity to the actual contention. To achieve such adaptiveness, an algorithm would necessarily have to avoid the delays that we use;  our bounds rely on injecting fixed delays in which processes must stall if they finish an attempt `too early'. 
%It would furthermore be interesting to see whether such bounds can be achieved against not only an adaptive player adversary, but also an adaptive scheduler.

It would be interesting to see how well our proposed lock algorithm does in practice. It has recently been shown that lock-free locks can be practical~\cite{ben2022lock}, and we believe that the stronger bounds that our construction provides may be useful. 
%Another useful property for real systems is the ability to dynamically choose locks in a nested way; 
In real systems, allowing the nesting of locks may be a useful primitive. While our construction allows acquiring multiple locks, these locks must be specified in advance and cannot be acquired from within a thunk (critical section).  We believe that our algorithm would maintain safety if locks were nested, but its proven bounds would not hold. It would therefore be interesting to develop an algorithm for wait-free nested locks with strong bounds.

\section*{Acknowledgements}
	We thank the anonymous referees for their comments. This work was supported by the National Science Foundation grants CCF-1901381, CCF-1910030, and CCF-1919223.
%\end{acks}

	\bibliographystyle{plain}
	\balance
	\bibliography{strings,biblio}
	
%	\appendix
        % \input{related}
        %\input{related}
%	\input{idempotent}
%	\input{activesetProofs}
%	\input{multiactiveproofs}
%	\input{lockProofs}
	%\input{lock-proof}
	%\input{univProof}
	%\input{nestedProofs}
%	\input{unknownBound}
\end{document}